\pgfplotsset{compat=1.3}
\newcommand{\vect}[1]{\boldsymbol{\mathrm{#1}}}
\newcommand{\mat}[1]{\boldsymbol{\mathrm{#1}}}
\newcommand{\diag}{\mathrm{diag}}
\newcommand{\linewidthmod}{0.98\linewidth}
\newcommand{\linewidthmodd}{0.8\linewidth}
\theoremstyle{definition}
\newtheorem{theorem}{Theorem}
\newtheorem{remark}{Remark}
\newtheorem{proposition}{Proposition}
\newtheorem{corollary}{Corollary}
\newtheorem{heuristic}{Heuristic}
\newtheorem{conjecture}{Conjecture}
\begin{document}

\title{\huge The Z3RO Family of Precoders Cancelling Nonlinear Power Amplification Distortion in Large Array Systems}

\author{Fran\c{c}ois Rottenberg, Gilles Callebaut, Liesbet Van der Perre% <-this % stops a space
%	\thanks{Part of this work has been presented at the ICC22 conference \cite{rottenberg2021z3ro}.} 
	\thanks{Fran\c{c}ois Rottenberg, Gilles Callebaut and Liesbet Van der Perre are with ESAT-DRAMCO, Ghent Technology Campus, KU Leuven, 9000 Ghent, Belgium (e-mail: francois.rottenberg@kuleuven.be).}
}

% The paper headers
\markboth{DRAFT}%
{}
%\markboth{Journal of \LaTeX\ Class Files,~Vol.~13, No.~9, September~2014}%
%{Shell \MakeLowercase{\textit{et al.}}: Bare Demo of IEEEtran.cls for Journals}
% The only time the second header will appear is for the odd numbered pages
% after the title page when using the twoside option.
% 
% *** Note that you probably will NOT want to include the author's ***
% *** name in the headers of peer review papers.                   ***
% You can use \ifCLASSOPTIONpeerreview for conditional compilation here if
% you desire.

% If you want to put a publisher's ID mark on the page you can do it like
% this:
%\IEEEpubid{0000--0000/00\$00.00~\copyright~2014 IEEE}
% Remember, if you use this you must call \IEEEpubidadjcol in the second
% column for its text to clear the IEEEpubid mark.

% use for special paper notices
%\IEEEspecialpapernotice{(Invited Paper)}

% make the title area
\maketitle

% As a general rule, do not put math, special symbols or citations
% in the abstract or keywords.
\begin{abstract}
	Large array systems use a massive number of antenna elements and clever precoder designs to achieve an array gain at the user location. These precoders require linear front-ends, and more specifically linear power amplifiers (PAs), to avoid distortion. This reduces the energy efficiency since PAs are most efficient close to saturation, where they generate most nonlinear distortion. Moreover, the use of conventional precoders can induce a coherent combining of distortion at the user locations, degrading the signal quality. In this work, novel linear precoders, simple to compute and to implement, are proposed that allow working close to saturation, while cancelling the third-order nonlinearity of the PA without prior knowledge of the signal statistics and PA model. Their design consists in saturating a single or a few antennas on purpose together with an negative gain with respect to all other antennas to compensate for the overall nonlinear distortion at the user location. The performance gains of the designs are significant for PAs working close to saturation, as compared to maximum ratio transmission (MRT) precoding and perfect per-antenna digital pre-distortion (DPD) compensation.
\end{abstract}

% Note that keywords are not normally used for peerreview papers.
%\begin{IEEEkeywords}
%\TODO{}.
%\end{IEEEkeywords}

% For peer review papers, you can put extra information on the cover
% page as needed:
% \ifCLASSOPTIONpeerreview
% \begin{center} \bfseries EDICS Category: 3-BBND \end{center}
% \fi
%
% For peerreview papers, this IEEEtran command inserts a page break and
% creates the second title. It will be ignored for other modes.
\IEEEpeerreviewmaketitle

\section{Introduction}

\subsection{Problem Statement}
Energy efficiency improvements and carbon footprint reduction are main concerns in our society and priorities of governments, such as expressed in Europe's Green Deal~\cite{EUGreenDeal}. Inefficient operation of the \gls{pa} has been identified as the main contributor to overall energy budget of wireless networks. Indeed, typical spectral efficient transmission schemes exhibit a high dynamic range, calling for a significant back-off in the amplifier operation. This brings about an inevitable trade-off between linearity and efficiency~\cite{Lavrador}. On the one hand, the linearity of the \gls{pa} directly impacts the signal quality. On the other hand, the \gls{pa} has a maximal efficiency when it is operating is close to saturation, where its characteristic is nonlinear and creates distortion. The latter can degrade the quality of the communication link, and moreover generates \gls{oob} distortions. This results in trade-offs between system capacity and power consumption.

Recent works have studied how the transmission with large antenna arrays can impact the trade-off between linearity and energy efficiency~\cite{Muneer,Fager,Moghadam2018,Teodoro2019}. In massive \gls{mimo} systems, authors have shown that the \gls{pa} nonlinear distortion is not always uniformly radiated~\cite{Larsson2018,Mollen2018TWC}. Measurements performed with an actual massive \gls{mimo} testbed also confirmed that the distortion terms can appear as correlated noise and hence can coherently combine~\cite{Zou2015}. In many situations, nonlinear distortion coherently combines in the direction(s) of the intended user(s) and more specifically at its location. This is in particular pronounced for a scenario with a single or a few users and dominant propagation direction(s), \textit{e.g.}, a strong \gls{los} component. The analysis has been extended to deployments with distributed arrays~\cite{rottenberg2021spatial}. When many intermodulation beams~\cite{Hemmi2002} appear, with increasing number of users and multi-path components in the channels, the transmit signals become ‘isotropically’ radiated~\cite{Mollen2018TWC}. More importantly, novel precoders designed with adequate performance criteria, as introduced in this paper, can avoid this coherent combining altogether.

\subsection{State-of-the-Art}

Allowing the \glspl{pa} to work close to saturation while preventing degradation of the link quality has been a vast area of research in the literature. One solution is to use a low \gls{papr} waveform or precoding technique, such as the constant-envelope precoder~\cite{Khan2013}. However, their adoption has been hindered by the associated high digital processing complexity. Another technique is to use reserved tones of an \gls{ofdm} system to design a peak-cancelling signal that lowers the \gls{papr} of the signal fed to the \glspl{pa}~\cite{Krongold2004}. Related challenges are the loss of spectral efficiency and the complexity of optimizing the data-dependent cancelling signal. Another solution is to use digital pre-distortion compensation techniques: the signal that is at the input of the \gls{pa} is pre-distorted so as to compensate for distortions created by the \gls{pa}. This enables higher PA efficiencies~\cite{cripps2006rf}. It has also been studied for massive \gls{mimo}~\cite{Tervo21, Tarver21}. These techniques are however data-dependent and often require a feedback loop. Applying these techniques requires a certain complexity burden and causes scalability problems, especially in a massive \gls{mimo} setting, where it has to be implemented for each \gls{pa}. Moreover, even a perfect digital pre-distortion only linearizes input signals with an amplitude lower than the saturation level of the PA (weakly nonlinear effects). Higher fluctuations are clipped due to saturation, resulting in nonlinear distortion (strongly nonlinear effects). This phenomenon becomes more likely as the back-off power is reduced, especially for high \gls{papr} signals such as multicarrier and massive \gls{mimo} systems. A neural network approach for \gls{dpd} has been considered in~\cite{Tarver2021}, where the DPD is applied before the precoder and scales with the number of users rather than the number of antennas. Another suggested solution is to use or create differences between the \gls{pa} characteristics to induce noncoherent combining of distortion at the user location~\cite{Anttila2019}.

In contrast to previous techniques, this work focuses on the design of novel linear precoders, which are simple to implement and to compute. A recent work~\cite{aghdam2020distortion} has followed a similar approach. However, no simple closed-form solution for the precoder could be found. An optimization approach is used as the algorithm is iterative, requires computing gradients of large matrices and projections. Moreover, due to the non-concavity of the problem, the authors need to perform the optimization several times with different initialization points. Another recent work~\cite{Kolomvakis2022} has considered the design of linear precoders to null nonlinear distortion towards victim users. The authors used a sub-optimal approach based on constant envelope precoding coefficients. In the work \cite{Zayani2019}, the symbols prior to the PAs are optimized with an iterative algorithm to minimize the mean squared error at the receive side. An iterative algorithm based on zero forcing precoding was considered in~\cite{Iofedov2015} to cancel nonlinear distortion and avoid user interference in an \gls{ofdm} system. The algorithm can also keep out-of-band radiation low so as to preserve spectral purity. As opposed to these previous works, the approach in this work is purely analytical. No iterative algorithms are required and we propose precoders in closed-form. Moreover, no constraint on the envelope of the precoder is imposed, leading to a more performant design.

%\TODO{talk about PA efficiency papers? cf. work of Emanuele?}

% As opposed to previous techniques, we propose a novel linear
% precoder, which compensates nonlinear distortion and can be
% implemented in addition to or instead of a DPD. A recent
% work [8] has followed a similar approach. However, no simple
% closed-form solution for the precoder was presented. The
% contributions are properly formalized in the next subsection.

%\TODO{I did not put a lot of reference, to be seen if required...}
%\TODO{for later: add clipping methods of Bjorn, but more complex}

\subsection{Contributions}

In this paper, we present an original contribution: the \gls{z3ro} family of linear precoders which have a low complexity and maximize the received \gls{snr} while completely canceling the third-order nonlinear distortion at the user location. More specifically, the structure of our paper and our contributions are structured as follows. Section~\ref{section:system_model} describes the system model. Sections~\ref{section:MRT_limitations} presents the limitation of the \gls{mrt} precoder, \textit{i.e.}, MRT is only optimal in the linear regime. As the PAs enter saturation, \gls{mrt} is not optimal due to nonlinear distortion. Section~\ref{section:Z3RO_precoder} presents the Z3RO family of linear precoders. We formulate an optimization problem to design the linear precoder that maximizes the \gls{snr} at the receiver while completely cancelling the third-order distortion at the user location. The problem is not concave. Still, we solve it and fully characterize its maxima. Each maximum requires to perform a line search. In the pure \gls{los} case, it is shown that all maxima are globally optimal and have a simple closed-form solution. For the general channel case, the so-called Z3RO precoder is proposed based on a heuristic design, which has a simple closed-form and is shown to provide good performance. The Z3RO precoder was initially proposed in~\cite{rottenberg2021z3ro}. A measurement-based performance evaluation was performed in~\cite{feys2022measurementbased}. The basic principle of the proposed precoders is to use a set of saturated\footnote{By ``saturated antennas", we refer to antennas with a relatively higher absolute channel gain than the remaining antennas.} antennas with a negative gain as compared to all other antennas so as to compensate their overall distortion. It is shown that the resulting array gain penalty becomes negligible in the large antenna case, and especially in the saturation regime, where the \gls{sndr} is limited by the \gls{sdr} and not by the \gls{snr}. It is also shown that while using a single saturated antenna is optimal in terms of \gls{snr}, using more induces a more spatially focused distortion pattern, which is useful regarding unintended locations and \gls{oob} radiation. Section~\ref{section:Simulation_results} presents simulation results. Finally, Section~\ref{section:conclusion} concludes the paper.

{\textbf{Notations}}: Vectors and matrices are denoted by bold lowercase $\vect{a}$ and uppercase letters $\mat{A}$, respectively (resp.). Superscripts $^*$ and $^T$ stand for conjugate and transpose operators. The symbols $\mathbb{E}$ denotes the expectation. $\jmath$ is the imaginary unit. $\mat{I}_N$ denotes the identity matrix of order $N$. %$\otimes$ stands for the Kronecker product and $\delta[n]$ is the Kronecker delta. 
The notation $\diag(\vect{a})$ refers to a diagonal matrix whose $k-$th diagonal entry is equal to the $k-$th entry of vector $\vect{a}$. The notation $\vect{g}^a$ element-wise takes the $a$-th power $\vect{g}$.

%Superscript $^*$ stands for conjugate operator. The symbol $\Re(.)$ denotes the real part. $\jmath$ is the imaginary unit. %$|\mat{A}|$ is the determinant of matrix $\mat{A}$. %The letters $e$ and $\gamma$ refer to the Euler number and the Euler-Mascheroni constant respectively. $h(.)$ and $I(.;.)$ refer to the differential entropy and the mutual information respectively. 

%Vectors and matrices are denoted by bold lowercase and uppercase letters, respectively. Non bold upper case letter refers to a random variable. Superscripts $^*$, $^T$ and $^H$ stand for conjugate, transpose and Hermitian transpose operators. The symbols $\tr$, $\mathbb{E}(.)$, $\Im(.)$ and $\Re(.)$ denote the trace, expectation, imaginary and real parts, respectively. $\jmath$ is the imaginary unit. The norm $\|\mat{A}\|$ is the Frobenius norm, \textit{i.e.}, $\|\mat{A}\|=\sqrt{\tr(\mat{A}^H\mat{A})}$. $|\mat{A}|$ is the determinant of matrix $\mat{A}$. $\mat{I}_N$ denotes the identity matrix of order $N$. $\delta_n$ and $\delta(t)$ are the Kronecker and Dirac deltas. The letters $e$ and $\gamma$ refer to the Euler number and the Euler-Mascheroni constant respectively. %The $\diag(.)$ operator applied to a vector returns a diagonal matrix whose $k-$th diagonal entry is equal to the $k-$th entry of the argument vector. %The $\diag(.)$ operator applied to a square matrix returns the same matrix with off-diagonal elements set to zero. Operator $\ceil{.}$ is the ceiling function. %$\otimes$ stands for the Kronecker product and 

% no vectors, 

\section{System Model}
\label{section:system_model}

\subsection{Signal Model}

% \begin{figure}[t!]
	% 	\centering 
	% 	{
		% 		\resizebox{0.9\linewidth}{!}{%
			% 			{\includegraphics[clip, trim=0cm 14cm 22cm 0cm, scale=1]{Fig/system_model.pdf}} %gauche bas droite haut
			% 		}
		% 	}
	% 	\caption{Transmit signal model: base station with $M$ antennas, single stream, linear precoding and power amplification.}
	% 	\label{fig:signal_model} 
	% 		\vspace{-1em}
	% \end{figure}

We consider a large array-based system with a single user and single \gls{bs} equipped with $M$ antennas.  The complex symbol intended for the user is denoted by $s$, with zero mean and variance $p=\mathbb{E}(|s|^2)$. The signal $s$ is precoded at transmit antenna $m$ using a precoder coefficient $w_{m}$. The complex baseband representation of the signal before the \gls{pa} of the corresponding antenna is denoted by $x_{m}$ and is given by $x_{m}= w_{m} s$.

%Its passband representation is denoted by $x_{m}^{\mathrm{pb}}(t)$ and is defined as
%\begin{align*}
%	x_{m}^{\mathrm{pb}}(t)=2\Re(x_{m}(t)e^{\jmath \omega_c t})=x_{m}(t)e^{\jmath \omega_c t}+x_{m}^*(t)e^{-\jmath \omega_c t},
%\end{align*}
%where $\omega_c=2\pi f_c$ with $f_c$ being the carrier frequency. The signal $x_{m}^{\mathrm{pb}}(t)$ is fed to a nonlinear \gls{pa}. Finally, the output of the \gls{pa}, defined as $y^{\mathrm{pb}}_{m}(t)$, is fed to antenna $m$ of \gls{bs} $m$. We also define the complex baseband representation of $y^{\mathrm{pb}}_{m}(t)$ as $y_{m}(t)$.

\subsection{Power Amplifier Model}

In the following, all \glspl{pa} are assumed memoryless and have the same transfer function. Their identical nature can be seen as a worst case in terms of coherent combining of distortion~\cite{Anttila2019}. For the sake of clarity and without loss of generality, the linear gain of the \gls{pa} is set to one. We only consider the third order nonlinear distortion of the \gls{pa}. This approximation regime is valid as the \gls{pa} enters saturation regime, which creates nonlinear distortion but not enough for higher order terms to provide a significant contribution. Under these assumptions, the \gls{pa} output of antenna $m$ can be written as
\begin{align}
	y_{m}= x_{m} +a_3 x_{m}  |x_{m}|^{2}, \label{eq:y_m_gen}
\end{align}
where the coefficient $a_3$ characterizes the nonlinear characteristic of the \gls{pa}, including both amplitude-to-amplitude modulation (AM/AM) and amplitude-to-phase modulation (AM/PM).

In the simulation section, a more general PA model, not limited to the third order, will be considered for validation purposes.

\subsection{Channel Model}\label{subsection:channel_model}

The complex channel gain from antenna $m$ to the user is denoted by $h_{m}$. 
The received signal is given by
\begin{align}
	r&=\sum_{m=0}^{M-1} h_{m} y_m + v, \label{eq:r_SU}%\\
	%	&= \beta_k \sum_{m=0}^{M-1} e^{-\jmath \phi_{m,k}} x_m + w_m
\end{align}
where $v$ is zero mean circularly symmetric complex Gaussian noise with variance $\sigma_v^2$. In the following, at some places, a pure \gls{los} channel will be considered. In this particular case, $h_{m}$ can be written as
\begin{align}
	h_{m}= \sqrt{\beta} e^{-\jmath \phi_{m}}. \label{eq:channel_model_LOS}
\end{align}
The real positive coefficient $\beta$ models the path loss. The difference of propagation distance between each of the antennas and the user results in an antenna-dependent phase shift $\phi_{m}$, which can be directly related to the antenna location and the angular direction of the user. For a \gls{ula} and a narrowband system, the phase shift is given by $\phi_{m}=m\frac{2\pi}{\lambda_c} d\cos(\theta)$, where $\lambda_c$ is the carrier wavelength, $d$ the inter-antenna spacing and $\theta$ is the user angle. 

The radiation pattern in an arbitrary direction $\tilde{\theta}$ can be computed as
\begin{align}
	P(\tilde{\theta})&=\mathbb{E}\left(\left|\sum_{m=0}^{M-1}y_me^{-\jmath\tilde{\phi}_m}\right|^2\right)%\nonumber\\
	%&=\sum_{m=,m'=0}^{M-1} {R}_{{y}_{m}{y}_{m'}} e^{-\jmath(\tilde{\phi}_m-\tilde{\phi}_{m'})}
	,\label{eq:radiation_pattern}
\end{align}
where $\tilde{\phi}_m=m \frac{2\pi}{\lambda_c} d\cos(\tilde{\theta})$. Defining the total transmit power $P_T=\int_{-\pi}^{\pi}P(\tilde{\theta})d\tilde{\theta}$, the array directivity is $D(\tilde{\theta})=\frac{P(\tilde{\theta})}{P_T/2\pi}$, \textit{i.e.}, $P(\tilde{\theta})$ is normalized with respect to an isotropic radiator. In the following, the radiation and directivity pattern of the linear and nonlinear parts of the amplified signal are depicted in several figures. They are respectively obtained by replacing $y_m$ in (\ref{eq:radiation_pattern}) by $x_m$ and $a_3 x_m |x_m|^2$.

\section{Maximum Ratio Transmission Precoder}
\label{section:MRT_limitations}

\begin{figure*}[t!]
	\centering 
	\subfloat[MRT, $M=32$.
	]{
		\resizebox{0.3\linewidth}{!}{%
			{\includegraphics[clip, trim=2.5cm 2.6cm 2.5cm 3cm, scale=1]{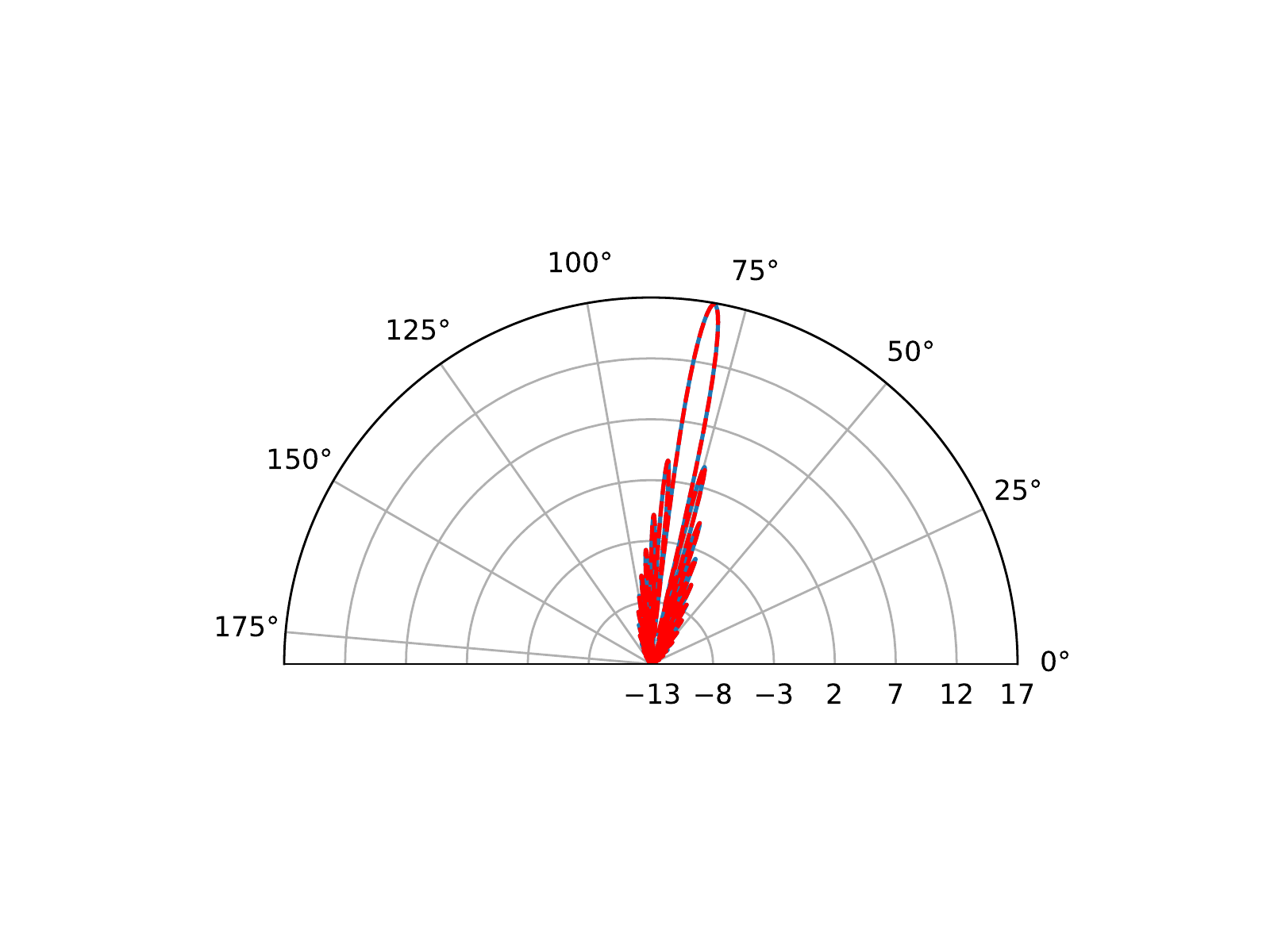}} %gauche bas droite haut
		}
	} %\hspace{-1em}
	\subfloat[\gls{z3ro}, $M=2$.]{
		\resizebox{0.3\linewidth}{!}{%
			{\includegraphics[clip, trim=2.5cm 2.6cm 2.5cm 3cm, scale=1]{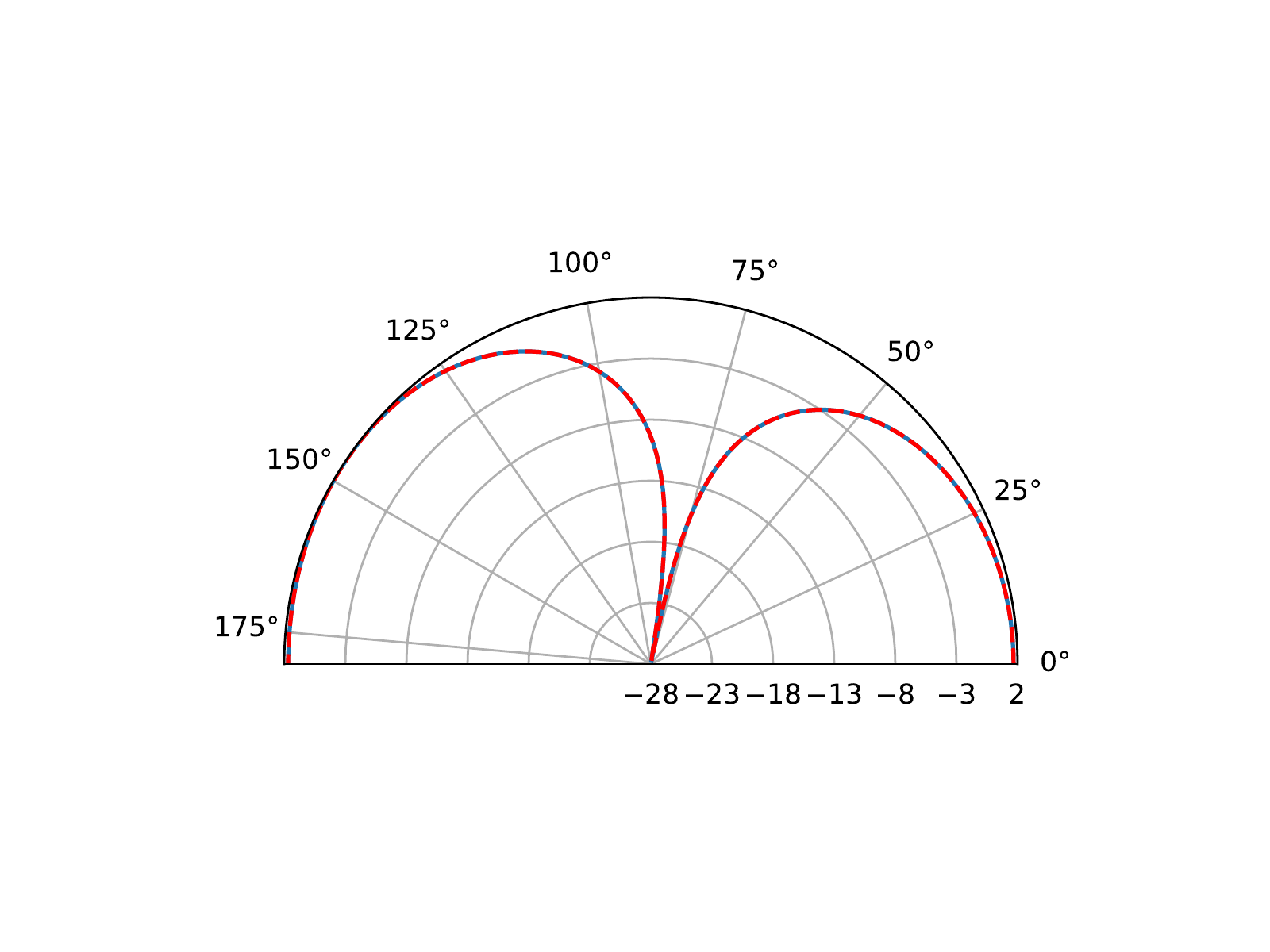}} %gauche bas droite haut
		}
	}\\\vspace{-1em}
	\subfloat[\gls{z3ro}, $M=8$.]{
		\resizebox{0.3\linewidth}{!}{%
			{\includegraphics[clip, trim=2.5cm 2.6cm 2.5cm 3cm, scale=1]{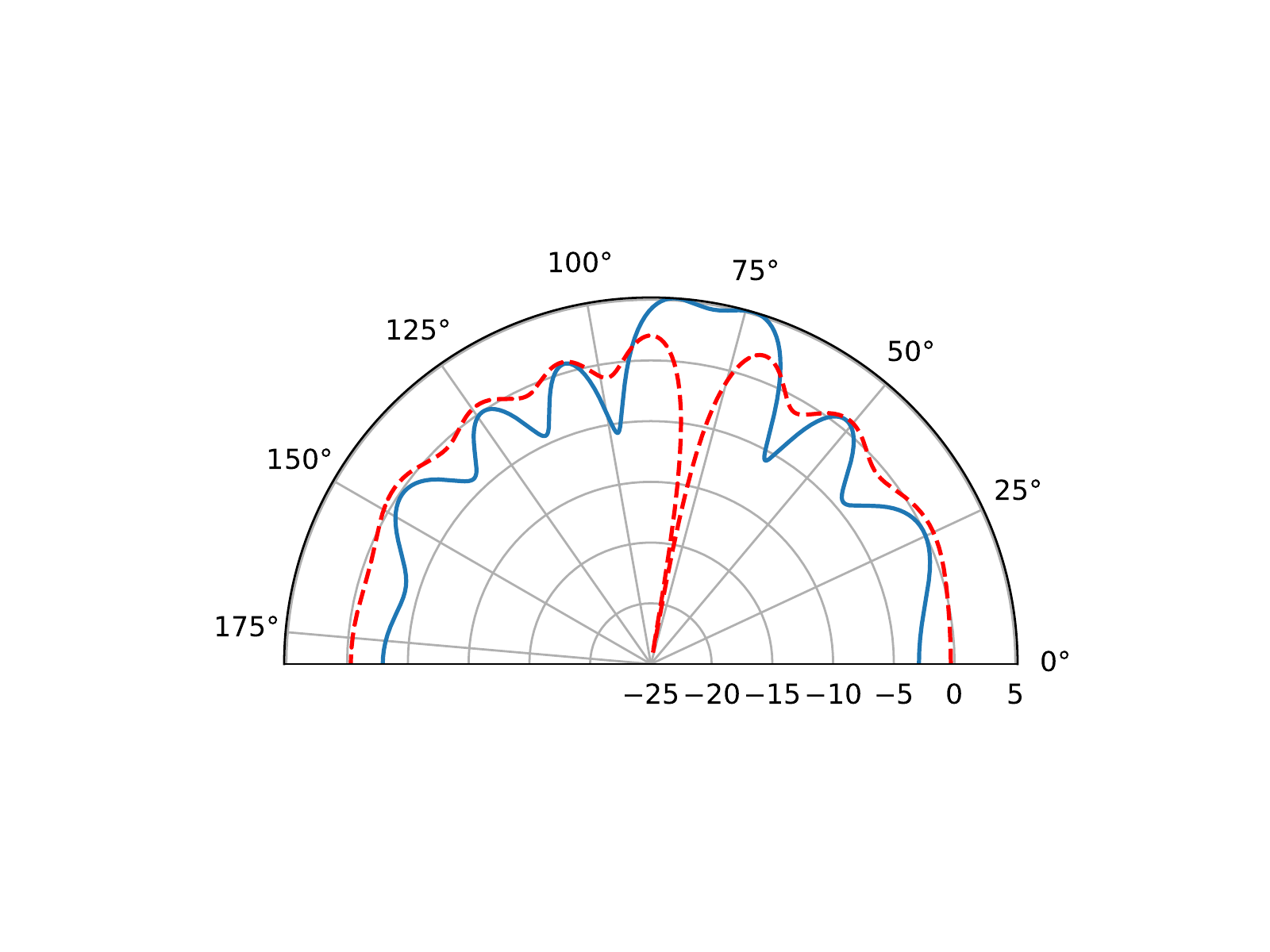}} %gauche bas droite haut
		}
	}
	\subfloat[\gls{z3ro}, $M=32$.]{
		\resizebox{0.3\linewidth}{!}{%
			{\includegraphics[clip, trim=2.5cm 2.6cm 2.5cm 3cm, scale=1]{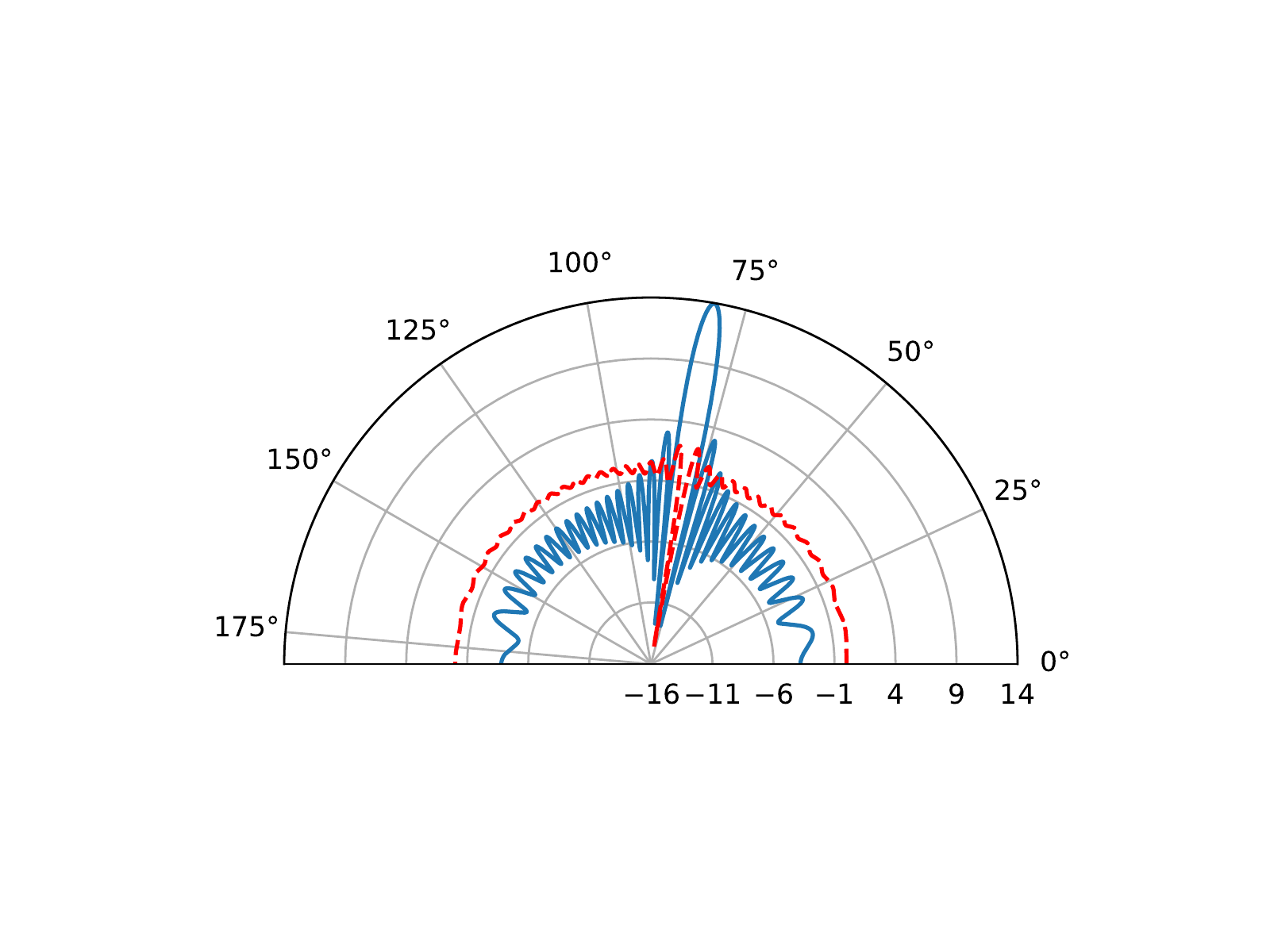}} %gauche bas droite haut
		}
	}
	\caption{Directivity pattern [dB] of the signal (continuous blue) and third-order distortion (dashed red) for a pure \gls{los} channel and half-wavelength \gls{ula}. User angle is $\theta=80^{\circ}$ and $M_s=1$ saturated antenna. For \gls{mrt}, distortion coherently combines in the user direction while it is null for \gls{z3ro}. The array gain penalty is infinite for $M=2$ while it vanishes as $M$ grows large.}
	\label{fig:directivity_patterns} 
	\vspace{-1em}
\end{figure*}

The well known \gls{mrt} precoder is obtained by maximizing the received \gls{snr} under a transmit power constraint, disregarding the nonlinear distortion terms at the output of the \gls{pa}. Its expression and the SNR at the user are given by \cite{paulraj2003introduction}
\begin{align*}
	w_m^{\mathrm{MRT}}&= \frac{h_m^*}{\sqrt{\sum_{m'=0}^{M-1}|h_{m'}|^2}},\ \text{SNR}^{\mathrm{MRT}}=  \frac{p\sum_{m'=0}^{M-1}|h_{m'}|^2}{\sigma_v^2}.
\end{align*}
The \gls{mrt} achieves an array gain of a factor $M$. The \gls{mrt} precoder is optimal as long as the \gls{pa} works in its linear regime. As $p$ increases, nonlinear terms will be amplified and distortion becomes non-negligible. The \gls{pa} output $y_m$ given in (\ref{eq:y_m_gen}) can be evaluated for $x_m=w_m^{\mathrm{MRT}} s$
\begin{align*}
	y_{m}&= s \alpha h_{m}^*   + a_3 s  |s|^{2} \alpha ^3 h_{m}^*  |h_{m}|^{2},
\end{align*}
where $\alpha= {1}/{\sqrt{\sum_{m'=0}^{M-1}|h_{m'}|^2}}$. The received signal (\ref{eq:r_SU}) becomes
\begin{align*}
	r= s  \alpha \sum_{m=0}^{M-1}|h_{m}|^{2} +a_3s  |s|^{2}  \alpha ^3 \sum_{m=0}^{M-1}|h_{m}|^{4} + v.
\end{align*}
This expression shows that the channel coherently combines both the linear term and the nonlinear term, \textit{i.e.}, their phases are matched. As a result, distortion coherently adds up at the user location and becomes the limiting factor at high power. To illustrate this, let us consider the pure \gls{los} channel introduced in (\ref{eq:channel_model_LOS}). Then,
\begin{align*}
	y_{m}&= \frac{e^{\jmath \phi_m}}{\sqrt{M}} (s    + a_3 s  |s|^{2}) \\
	r&= \sqrt{\beta M} (s+ a_3s  |s|^{2})+ v,
\end{align*}
where it is clear that the array gain $M$ affects both linear and nonlinear terms. Moreover, one can see that both the linear and nonlinear terms are beamformed in the same direction, as they are both affected by the term $e^{\jmath \phi_m}$. An example of the directivity pattern of linear/nonlinear terms for a \gls{ula} is shown in Fig.~\ref{fig:directivity_patterns}~(a).

\section{Zero Third-Order (Z3RO) Family of Precoders}
\label{section:Z3RO_precoder}

The \gls{mrt} precoder induces a coherent combining of distortion at the user location, as demonstrated in the previous section. As $p$ increases, the \glspl{pa} will become more saturated and the user performance will be limited by its \acrlong{sdr} (SDR). This section presents the family of \gls{z3ro} precoders, which are designed to maximize the \gls{snr} at the user location while cancelling the combining of third order distortion.

Inserting (\ref{eq:y_m_gen}) in (\ref{eq:r_SU}), the received signal at the user location for a general linear precoder $w_m$ is
\begin{align*}
	r &= s \sum_{m=0}^{M-1} h_m w_m +a_3 s |s|^2 \sum_{m=0}^{M-1} h_m w_m |w_m|^2 + v.
\end{align*}
The distortion term can be forced to zero by ensuring that
\begin{align}
	\sum_{m=0}^{M-1} h_m w_m |w_m|^2&=0. \label{eq:zero_distortion_constraint}
\end{align}
This constraint does not depend on the transmit symbol $s$ and the \gls{pa} parameter $a_3$, which makes it practical to implement. A similar condition was obtained in \cite{aghdam2020distortion}. However, the authors made the pessimistic conclusion that considering this constraint leads to a considerable reduction of array gain. Indeed, take the two antenna case $M=2$ and a \gls{los} channel $h_m=\sqrt{\beta} e^{-\jmath \phi_m}$. If the user angle is coming from broadside, it implies that $\phi_0=\phi_1=0$ and the constraint (\ref{eq:zero_distortion_constraint}) implies that
\begin{align*}
	w_0|w_0|^2&=-w_1|w_1|^2 \Leftrightarrow w_0=-w_1,
\end{align*}
which leads to a zero array gain, \textit{i.e.}, $|w_0+w_1|^2=0$. The same result occurs for any user angle, as depicted in Fig.~\ref{fig:directivity_patterns}~(b). However, it is shown further that, with the proposed designs, as the number of antennas $M$ grows large, the loss in array gain becomes negligible, as depicted in Fig.~\ref{fig:directivity_patterns}~(c) and (d). The precoder optimization problem can be formulated as
\begin{align}
	\max_{w_0,...,w_{M-1}} \text{SNR}=\frac{p}{\sigma_v^2}  \left|\sum_{m=0}^{M-1} h_m w_m\right|^2, \label{eq:3rd_problem}
\end{align}
under the two constraints
\begin{align}
	\text{Transmit power: }&\sum_{m=0}^{M-1}|w_m|^2=1, \label{eq:transmit_power_constraint}\\  
	\text{Zero third-order distortion: }&\sum_{m=0}^{M-1} h_m w_m |w_m|^2=0.
\end{align}
Constraint (\ref{eq:transmit_power_constraint}) is not exactly a transmit power constraint as the nonlinear transmited power is disregarded. Given the saturation effect of the PA, it can be seen as an upper bound on the actual transmit power. This choice is made here for simplicity and for the sake of comparison with the \gls{mrt} precoder, which is found under a similar constraint. Moreover, this choice is a good approximation. Even if nonlinearities are considered in this work, we consider a regime where they may have a strong impact on the \gls{sdr} while having a limited impact on the total transmit power. As a simple example, consider an antenna with 4\% nonlinear distortion power with respect to the total transmit power. This might be negligible in terms of transmit power while the \gls{sdr} is limited to about $14$ dB only. It thus makes sense to neglect them in the transmit power computation but consider them in the \gls{sndr} expression.

The above problem is non-concave and not trivial to solve. However, it can be first reformulated in a simpler form using the change of variable $w_m=g_me^{-\jmath \angle h_m}$. Defining $r_m=|h_m|$, the reformulated problem becomes
\begin{align}
	\max_{g_0,...,g_{M-1}} \left|\sum_{m=0}^{M-1} r_m g_m\right|^2\ \text{s.t.}\ &\sum_{m=0}^{M-1} |g_m|^2=1,\label{eq:complex_formulation}\\ 
	&\sum_{m=0}^{M-1} r_m g_m |g_m|^2=0.\nonumber
\end{align}
For a given $g_m$, $w_m$ can be retrieved as $w_m=g_me^{-\jmath \angle h_m}$. From the above formulation, a conjecture can be made. In the following, to avoid equivalent symmetric solutions, we implicitly constrain the precoders to lead to a real and positive array gain $\sum_{m=0}^{M-1}r_m g_m$. Indeed, one can insert $w_m e^{\jmath \phi}$ with $\phi \in \mathbb{R}$ in (\ref{eq:3rd_problem}) and easily check that the array gain is equal to the one achieved by $w_m$ while (\ref{eq:transmit_power_constraint}) and (\ref{eq:zero_distortion_constraint}) still hold. From the formulation (\ref{eq:3rd_problem}), a conjecture can be made.
\begin{conjecture}\label{conjecture}
	An optimal $g_m$ for problem (\ref{eq:complex_formulation}) should be purely real up to a constant phasor.
\end{conjecture}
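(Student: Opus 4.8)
The plan is to exploit the two symmetries of problem \eqref{eq:complex_formulation} together with its first-order (KKT) optimality conditions. Besides the global-phase invariance $g_m \mapsto e^{\jmath\phi}g_m$ already noted above, the problem is invariant under entrywise complex conjugation $g_m\mapsto \bar g_m$: since the $r_m$ are real, the objective $\left|\sum_m r_m g_m\right|^2$ and the power constraint are unchanged, while the distortion constraint $\sum_m r_m g_m|g_m|^2$ merely gets conjugated and therefore still vanishes. Hence if $g$ is a maximizer, so are $e^{\jmath\phi}g$ and $\bar g$. The target statement — that a maximizer is real up to a constant phasor — is precisely the assertion that a maximizing orbit is fixed by conjugation, so the strategy is to show that conjugation maps a maximizer back into the same global-phase orbit.

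First I would write the Lagrangian with a real multiplier $\mu$ for the power constraint and a complex multiplier $\nu$ for the (complex) distortion constraint, and impose stationarity through the Wirtinger derivative $\partial/\partial\bar g_m$. For every antenna with $r_m>0$ this yields the per-antenna relation
\begin{align*}
S = \frac{\mu}{r_m}\,g_m + \tfrac12\,\bar\nu\,g_m^2 + \nu\,|g_m|^2, \qquad S:=\sum_{m=0}^{M-1} r_m g_m.
\end{align*}
Multiplying this identity by $r_m\bar g_m$, summing over $m$, and using both constraints (the distortion constraint and its conjugate kill the two cubic sums) collapses the right-hand side to $\mu$, giving $\mu=|S|^2>0$. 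By the global-phase symmetry I may take $S$ real and positive.

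Next I would extract the imaginary part of the per-antenna relation. Writing $g_m=u_m+\jmath t_m$ and assuming for the moment that $\nu$ is real, the imaginary part collapses to $t_m\bigl(\mu/r_m+\nu u_m\bigr)=0$ for every $m$. Since $\mu>0$, this forces $t_m=0$ on every antenna except possibly those pinned to the single value $u_m=-\mu/(r_m\nu)$; a short feasibility/optimality check of that exceptional branch then removes it, leaving $g_m$ real for all $m$, which is the claim. It therefore remains to argue that $\nu$ can be taken real. Here I would invoke the conjugation symmetry: applying it to a maximizer replaces $(\mu,\nu)$ by $(\mu,\bar\nu)$, so whenever the maximizing orbit is conjugation-invariant the two multiplier sets must coincide and $\nu=\bar\nu$.

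The hard part is exactly this last link. Because the problem is non-concave, it may possess several distinct maximizing orbits, and symmetry alone does not forbid a genuinely complex conjugate pair $\{g,\bar g\}$ of maximizers lying in two different global-phase orbits; for such an orbit $\nu$ need not be real and the per-antenna cubic can, in principle, admit complex roots. A complete proof would have to show that every maximizing orbit is fixed by conjugation — for instance by establishing uniqueness of the maximizer up to global phase, after which $\bar g=e^{\jmath\phi_0}g$ immediately gives $2\angle g_m=-\phi_0$ and hence a common phase — or to rule out directly any maximizing complex solution of the cubic. Controlling this multiplicity for an arbitrary channel $\{r_m\}$ is the obstacle that keeps the statement at the level of a conjecture; in the pure \gls{los} case, where the maxima are later shown to be globally optimal with a closed form, this gap should be closable explicitly.
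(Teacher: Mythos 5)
The first thing to note is that the paper does not prove this statement either: its ``proof'' explicitly concedes that the conjecture ``is not yet rigorously proven'' and defers to Appendix~\ref{appendix:conjecture}, which offers only an intuitive argument (a single strongly saturated antenna of opposite polarity minimizes the array-gain loss, after which ``the problem then becomes fully real-valued'') supported by numerical experiments with solvers from many initializations. Your route is genuinely different and more analytical. You exploit the conjugation symmetry $g_m\mapsto \bar g_m$ of (\ref{eq:complex_formulation}), correctly reformulate the claim as conjugation-invariance of the maximizing phase orbit, and work out the complex first-order conditions. The algebra is sound as far as it goes: the Wirtinger stationarity relation, the identity $\mu=|S|^2>0$ obtained by multiplying by $r_m\bar g_m$, summing, and killing the cubic sums with the constraint and its conjugate, and the dichotomy $t_m\left(\mu/r_m+\nu u_m\right)=0$ under real $\nu$ are all correct. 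Moreover, restricted to real variables and real $\nu$, your per-antenna relation collapses exactly (up to renaming of multipliers, your $(\mu,\nu)$ being the paper's $(\lambda,\mu)$) to the first-order condition (\ref{eq:FOC_basic}) that the paper uses in Appendix~\ref{appendix:proof_theorem} to prove Theorem~\ref{theorem:maxima}; your framework is thus the natural complexification of the paper's real analysis, and your $\mu=|S|^2$ mirrors the paper's (\ref{eq:lambda_squared}). What each approach buys: the paper's heuristic explains \emph{why} one expects a single negatively-driven antenna and backs it empirically; yours isolates the precise algebraic obstruction --- realness of the distortion multiplier $\nu$, equivalently conjugation-invariance of every maximizing orbit --- thereby reducing the conjecture to a uniqueness-type statement rather than to simulation evidence.

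Two caveats on your side. First, dismissing the exceptional branch ($t_m\neq 0$ with $u_m=-\mu/(r_m\nu)$) is more than ``a short feasibility/optimality check'': taking the real part of your stationarity relation on that branch gives $t_m^2=2S/\nu-\mu^2/(r_m^2\nu^2)$, which is not automatically negative, so eliminating it genuinely requires a second-order (Hessian) argument of the type the paper carries out in Appendix~\ref{appendix:proof_theorem}. Second --- and you flag this yourself --- nothing in the symmetry argument forces $\nu$ to be real when several distinct maximizing orbits coexist, so your proposal, exactly like the paper's reasoning, is a partial justification rather than a proof. Given that the statement is an open conjecture in the paper, explicitly identifying and naming this residual gap is the correct and honest outcome, not a defect of your write-up.
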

\begin{proof}
	This conjecture is not yet rigorously proven. Still, clarifying elements are provided in Appendix~\ref{appendix:conjecture}.
\end{proof}
Using this conjecture, the problem is converted to an all real problem
\begin{align}
	\max_{g_0,...,g_{M-1}} \left(\sum_{m=0}^{M-1} r_m g_m\right)^2 \text{s.t.} \sum_{m=0}^{M-1} g_m^2=1, \sum_{m=0}^{M-1} r_m g_m^3=0, \label{eq:all_real_3rd_problem}
\end{align}
In the following, we always assume that $r_m>0,\ \forall m$. Otherwise the zero gain antennas can be set inactive and discarded from the optimization. Moreover, we define $\alpha$ in a general sense as a normalization constant that ensures that the transmit power constraint is satisfied, \textit{i.e.}, for a given precoder, real $g_m$ or complex $w_m$, $\alpha$ is given by
\begin{align}
	\alpha=\frac{1}{\sqrt{\sum_{m=0}^{M-1}g_m^2}}=\frac{1}{\sqrt{\sum_{m=0}^{M-1}|w_m|^2}}. \label{eq:alpha_def}
\end{align}
The maxima of (\ref{eq:all_real_3rd_problem}) are given in the following theorem.
\begin{theorem}[Maxima]\label{theorem:maxima} Problem (\ref{eq:all_real_3rd_problem}) has $M$ potential maxima. Indexing them by $m'=0,...,M-1$, the $m'$-th maximum is feasible if there is a positive and real constant $\xi$ which is the solution of the equation
	\begin{align*}
		\sum_{m=0, m \neq m'}^{M-1} \frac{(-1+\sqrt{1+r_m^2\xi})^3}{r_m^2}=\frac{(1+\sqrt{1+r_{m'}^2\xi})^3}{r_{m'}^2}.
	\end{align*}
	The maximum is then obtained by using the precoder
	\begin{align*}
		g_{m,m'}&=\alpha \begin{cases}
			\frac{-1+\sqrt{1+r_m^2\xi}}{r_m}\ &\text{if}\ m \neq m'\\
			\frac{-1-\sqrt{1+r_m^2\xi}}{r_m}\ &\text{if}\ m = m'
		\end{cases}.
	\end{align*}
\end{theorem}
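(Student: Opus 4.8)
The plan is to attack (\ref{eq:all_real_3rd_problem}) by the method of Lagrange multipliers, since it is a smooth objective under two smooth equality constraints. Writing $S = \sum_{m} r_m g_m$ for the (real, positive) array gain and introducing multipliers $\lambda$ and $\mu$ for the power and zero-distortion constraints, I would form
\[
\mathcal{L} = S^2 - \lambda\Big(\sum_{m} g_m^2 - 1\Big) - \mu \sum_{m} r_m g_m^3,
\]
and set $\partial \mathcal{L}/\partial g_m = 0$, which gives, for every $m$,
\[
2 S r_m - 2\lambda g_m - 3\mu r_m g_m^2 = 0.
\]
The crucial observation is that this is a single quadratic in $g_m$ whose coefficients depend on $m$ only through $r_m$, so each $g_m$ must equal one of the two roots
\[
g_m^{\pm} = \frac{\lambda}{3\mu r_m}\Big(-1 \pm \sqrt{1 + \tfrac{6\mu S}{\lambda^2}r_m^2}\Big).
\]
Defining $\xi = 6\mu S/\lambda^2$ and absorbing the common factor $\lambda/(3\mu)$ into the normalization constant $\alpha$ of (\ref{eq:alpha_def}) reproduces exactly the two candidate values $\alpha(-1\pm\sqrt{1+r_m^2\xi})/r_m$ appearing in the statement.

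Next I would pin down the signs of the multipliers, which controls which root each antenna may take. Multiplying the stationarity equation by $g_m$ and summing, the two constraints collapse the identity to $\lambda = S^2 > 0$. For $\mu$, I would argue by contradiction: if $\mu \le 0$ the product of the two roots, $-2S/(3\mu)$, is nonnegative, forcing both roots of every antenna to share the same sign; then all $g_m$ carry one common sign and $\sum_{m} r_m g_m^3 \neq 0$, violating the zero-distortion constraint. Hence $\mu > 0$, which makes $g_m^{+} > 0 > g_m^{-}$, i.e. for each antenna one root is a positive-gain choice and the other a negative-gain (saturated) choice. This is what produces the two-branch form of the precoder and forces at least one antenna onto the negative branch.

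I would then determine how many antennas sit on the negative branch at a genuine maximum, and this is the part I expect to be the real obstacle. The Hessian of $\mathcal{L}$ has the rank-one-plus-diagonal form $2\vect{r}\vect{r}^T - \mat{D}$, where the diagonal entry $\mat{D}_{mm} = 2\lambda\sqrt{1+r_m^2\xi}$ is positive on positive-branch antennas and negative on negative-branch ones. Since $2\vect{r}\vect{r}^T$ is positive semidefinite, the Hessian dominates $-\mat{D}$ in the Loewner order, so it has at least as many positive eigenvalues as $-\mat{D}$, namely the number of negative-branch antennas; requiring the Hessian to be negative semidefinite on the codimension-two tangent space of the constraints then caps that number. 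Combined with the feasibility requirement that at least one antenna be on the negative branch, this should isolate the single saturated antenna $m'$ and yield exactly the $M$ candidate solutions indexed by $m'=0,\dots,M-1$. Establishing this sign-counting cleanly, in particular excluding configurations with two or more negative antennas as non-maximal, is the delicate step; the rest is bookkeeping.

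Finally, with the sign pattern fixed, I would substitute the candidate precoder into the two constraints. The zero-distortion constraint $\sum_{m} r_m g_m^3 = 0$, using $(-1-\sqrt{1+r_{m'}^2\xi})^3 = -(1+\sqrt{1+r_{m'}^2\xi})^3$, rearranges directly into the scalar equation for $\xi$ stated in the theorem; a one-dimensional intermediate-value argument shows this equation admits a positive root precisely when the saturated antenna is not too dominant relative to the others, which is exactly the stated feasibility condition. The power constraint (\ref{eq:transmit_power_constraint}) then fixes $\alpha$ through (\ref{eq:alpha_def}), completing the characterization.
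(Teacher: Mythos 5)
Your first-order analysis (Lagrangian, the per-antenna quadratic, $\lambda=S^2>0$, and the root-product argument forcing $\mu>0$) matches the paper's proof and is sound; your sign argument for $\mu$ is in fact a cleaner version of the paper's. The genuine gap is in the second-order step, and it is not merely ``delicate'' — the tool you chose cannot close it. Your Loewner-order counting shows the Lagrangian Hessian $2\vect{r}\vect{r}^T-\mat{D}$ has at least $k$ positive eigenvalues when $k$ antennas sit on the negative branch, and negative semidefiniteness on the codimension-two tangent space requires the $k$-dimensional positive eigenspace to meet that $(M-2)$-dimensional space trivially. But $\dim(P\cap T)\ge k+(M-2)-M=k-2$, so this only forces $k\le 2$: a configuration with exactly two saturated antennas passes your test, since a $2$-dimensional subspace can intersect an $(M-2)$-dimensional one trivially. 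The theorem, however, asserts there are exactly $M$ maxima, one per choice of a \emph{single} saturated antenna, so the $k=2$ case must be killed by a problem-specific computation, not by dimension counting. The paper does this by eliminating both constraints (solving $g_0$ from the zero-distortion constraint and absorbing the power constraint by normalization), then showing that the $(0,0)$ diagonal entry of the reduced Hessian is strictly positive at any critical point with at least two negative gains — after reindexing so that the two saturated antennas are $m=0,1$ with $r_1\ge r_0$, each of the three terms composing $[\mat{H}]_{0,0}$ is bounded below using the explicit form $\tilde{g}_m=\alpha(-1-\sqrt{1+r_m^2\xi})/r_m$ and the ratio $\frac{r_1\tilde g_1}{r_0\tilde g_0}=\frac{1+\sqrt{1+r_1^2\xi}}{1+\sqrt{1+r_0^2\xi}}\ge 1$.

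The second gap is the converse direction: nothing in your proposal shows that the $k=1$ critical points actually \emph{are} maxima; you only apply necessary conditions and then assert the surviving candidates are the maxima. Since your necessary condition admits $k\in\{1,2\}$, you can conclude neither that the $k=1$ points are maxima nor that they are the only ones. The paper's proof spends roughly half of its length on exactly this sufficiency claim: it shows the reduced Hessian at a single-saturated-antenna critical point is negative semidefinite by splitting it into the group proportional to $2\lambda$ and the group proportional to $4\sqrt{\lambda}$, and proving each is negative semidefinite through congruence transformations (multiplying by $\diag(\vect{r}^{-1/2})\diag(\tilde{\vect{g}}^{-1/2})$) that reduce them to matrices of the form $-\mat{I}$ plus a rank-one term whose single nontrivial eigenvalue is shown to vanish using the constraints $\vect{r}^T\tilde{\vect{g}}^3=-r_0\tilde g_0^3$ and $\|\tilde{\vect{g}}\|^2=1-\tilde g_0^2$. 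To repair your proof you would need to supply both of these arguments; the feasibility equation for $\xi$ and the normalization $\alpha$, which you do handle correctly, are indeed the easy bookkeeping.
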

\begin{proof}
	See Appendix~\ref{appendix:proof_theorem}.
\end{proof}

\begin{remark}[Saturated antenna]
	Each maximum is obtained by using a single antenna with a negative gain and saturated in such a way that its nonlinear distortion cancels the aggregated nonlinear distortion of all other antennas at the user location. In other words, distortion is used and amplified to cancel distortion. There is a certain analogy possible with the tone reservation technique \cite{Krongold2004}. The main difference is that, here, the cancelling elements are in the space domain and not in the frequency domain. Hence, they do not lead to a loss of spectral utilization. They also rely on the saturation of the cancelling elements to be more energy-efficient.
\end{remark}

\begin{remark}[Global maximum]\label{remark:global_optimum}
	Finding the global optimum of (\ref{eq:all_real_3rd_problem}) requires finding the maximum among the $M$ maxima. This is studied via simulations in Section~\ref{section:Simulation_results} (Fig.~\ref{fig:comparison_maxima}). It is shown that the maxima are close the the global one as long as the saturated antenna has a gain $r_m$ close to the median of all $r_m$. For typical channel gains that we simulated, all maxima were found to be feasible. Non-feasible maxima were found in pathological cases where the channel gain of the saturated antenna $r_{m'}$ is very large, \textit{e.g.}, more than 20 dB above the average of the other ones.
\end{remark}

To avoid having to perform the line search procedure and having to search among the different maxima, we present in the following the so-called Z3RO precoder which has a closed-form expression. We also show that, while inducing an array gain penalty, using more than one antenna with negative gains can be useful in practice.

\subsection{Line-of-Sight Channel}

\begin{corollary}[Z3RO precoder in LOS]\label{corollary_LOS}
	For the pure \gls{los} channel given in (\ref{eq:channel_model_LOS}), all maxima of (\ref{eq:all_real_3rd_problem}) are global maxima, \textit{i.e.}, achieve the same array gain. The $m'$-th maxima is then given by
	\begin{align*}
		g_{m,m'}&=\alpha \begin{cases}
			1\ &\text{if}\ m \neq m'\\
			-(M-1)^{1/3}\ &\text{if}\ m = m'
		\end{cases},
	\end{align*}
	Moreover, critical points of (\ref{eq:all_real_3rd_problem}), are obtained by defining a set $\mathcal{M}$ of $M_s$ antennas, chosen arbitrarily among the $M$ antennas with $M/2>M_s>0$, and using the precoder
	\begin{align*}
		g_{m,\mathcal{M}}^{\mathrm{Z3RO}}&=\alpha \begin{cases}
			- \left(\frac{M-M_s}{M_s}\right)^{1/3}\ &\text{if}\ m \in \mathcal{M}\\
			1\ &\text{otherwise}
		\end{cases}.
	\end{align*}
	The SNR at the user is
	\begin{align}
		\text{SNR}_{\mathrm{LOS}}^{\mathrm{Z3RO}}&=\frac{\beta p}{\sigma_v^2} M \frac{\left(\zeta^{2/3}-(1-\zeta)^{2/3}\right)^{2}}{\zeta^{1/3}+(1-\zeta)^{1/3}}, \label{eq:SNR_Z3RO}
%		\frac{(M-M_s-M_s^{2/3}(M-M_s)^{1/3})^2}{M-M_s+M_s^{1/3}(M-M_s)^{2/3}}
	\end{align}
	where $\zeta=M_s/M$. For a fixed value of $M_s$,
	\begin{align*}
		\lim_{M\rightarrow +\infty} \frac{\text{SNR}_{\mathrm{LOS}}^{\mathrm{Z3RO}}}{\text{SNR}^{\mathrm{MRT}}} = 1.
	\end{align*}
	The array gain penalty versus MRT vanishes.
\end{corollary}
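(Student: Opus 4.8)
The plan is to dispatch the three assertions of the corollary in turn, in every case exploiting the fact that the pure \gls{los} channel makes $r_m=|h_m|=\sqrt{\beta}$ constant across antennas. First I would recover the single-saturated-antenna maxima directly from Theorem~\ref{theorem:maxima}. Setting $r_m=\sqrt{\beta}$ for all $m$ and writing $u=\sqrt{1+\beta\xi}$, the feasibility equation collapses to $(M-1)(u-1)^3=(u+1)^3$. The enabling step is to take cube roots of both sides, turning the cubic into the linear relation $(M-1)^{1/3}(u-1)=u+1$, so that $\frac{u+1}{u-1}=(M-1)^{1/3}$. Inserting the two precoder values of Theorem~\ref{theorem:maxima} and absorbing the common factor $(u-1)/\sqrt{\beta}$ into the normalization $\alpha$ of (\ref{eq:alpha_def}) then yields exactly $g_{m,m'}=\alpha$ for $m\neq m'$ and $g_{m',m'}=-\alpha(M-1)^{1/3}$; positivity of $\xi$ forces $u>1$, hence $M>2$, which recovers the degeneracy already noted for $M=2$. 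Because the $r_m$ are identical, problem (\ref{eq:all_real_3rd_problem}) is invariant under any permutation of the antenna indices, so the $M$ maxima are permutations of one another and attain the same array gain $\sum_m r_m g_{m,m'}$; hence all are global.

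Next I would show that the $M_s$-antenna precoder $g_{m,\mathcal{M}}^{\mathrm{Z3RO}}$ is a critical point of (\ref{eq:all_real_3rd_problem}). Since Theorem~\ref{theorem:maxima} only produces single-saturated-antenna maxima, this has to be verified independently. Writing $c=\left(\frac{M-M_s}{M_s}\right)^{1/3}$ and $g_m=\alpha\tilde{g}_m$ with $\tilde{g}_m\in\{-c,1\}$, the zero third-order constraint $\sum_m g_m^3=0$ reduces to $-M_s c^3+(M-M_s)=0$, which holds by the definition of $c$, and the power constraint is satisfied by $\alpha$. For stationarity I would form the Lagrangian and differentiate, obtaining $2\sqrt{\beta}\,A=2\lambda g_k+3\mu\sqrt{\beta}\,g_k^2$ for every $k$, where $A=\sum_m\sqrt{\beta}\,g_m$. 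As $g_k$ takes only the values $\alpha$ and $-\alpha c$, this is two equations in the two multipliers $(\lambda,\mu)$; subtracting them and dividing out the nonzero factor $1+c$ leaves a consistent pair $(\lambda,\mu)$, confirming the configuration is a critical point. For $M_s=1$ it coincides with the maximum of the first part.

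Finally I would evaluate the \gls{snr}. Since $h_m w_m=|h_m|g_m=\sqrt{\beta}\,g_m$, the received gain is $\sqrt{\beta}\sum_m g_m$. Expressing everything through $\zeta=M_s/M$, I would compute the normalization $\sum_m\tilde{g}_m^2=M(1-\zeta)^{2/3}\big(\zeta^{1/3}+(1-\zeta)^{1/3}\big)$ and the numerator $\sum_m\tilde{g}_m=M(1-\zeta)^{1/3}\big((1-\zeta)^{2/3}-\zeta^{2/3}\big)$. After squaring and dividing, the shared factor $(1-\zeta)^{2/3}$ cancels, producing (\ref{eq:SNR_Z3RO}). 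Using $\text{SNR}^{\mathrm{MRT}}=\frac{p\beta M}{\sigma_v^2}$ from Section~\ref{section:MRT_limitations}, the ratio becomes $\frac{(\zeta^{2/3}-(1-\zeta)^{2/3})^2}{\zeta^{1/3}+(1-\zeta)^{1/3}}$, and letting $M\to\infty$ with $M_s$ fixed sends $\zeta\to 0$, so the ratio tends to $\frac{(0-1)^2}{0+1}=1$ and the array gain penalty vanishes.

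The computations are largely mechanical, so the main delicate point is the stationarity check for $M_s>1$: those configurations are not delivered by Theorem~\ref{theorem:maxima}, and I must be careful to verify genuine consistency of the two-multiplier system rather than merely that the two constraints are met.
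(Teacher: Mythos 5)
Your proof is correct, and it follows the same overall skeleton as the paper's (particularize to LOS, identify the critical points, invoke permutation symmetry for global optimality, then compute the SNR and the limit), but the middle of your argument takes a genuinely more self-contained route. The paper never solves the feasibility equation for $\xi$: it works from the critical-point formula (\ref{eq:critical_points_simplified}) established \emph{inside} the proof of Theorem~\ref{theorem:maxima}, observes that in LOS all antennas share the same two candidate gain values, and pins down the negative value directly from the zero-distortion constraint as $\delta=-\alpha\left(\frac{M-M_s}{M_s}\right)^{1/3}$ for \emph{any} $M_s$ -- so the multi-antenna critical points come for free from that general characterization, and the second-order analysis of Appendix~\ref{appendix:paragraph_maxima} is then cited to rule out $M_s>1$ as maxima. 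You instead work only from the \emph{statement} of Theorem~\ref{theorem:maxima}: you solve the feasibility equation explicitly via the cube-root trick $(M-1)^{1/3}(u-1)=u+1$ (a nice touch, which also recovers the $M>2$ requirement and the $M=2$ degeneracy), and since the theorem statement only delivers single-saturated-antenna maxima, you verify the $M_s>1$ critical points independently through the $2\times 2$ linear system in $(\lambda,\mu)$; your consistency check is sound (the determinant $6\sqrt{\beta}\alpha^3 c(c+1)$ is nonzero since $c>0$). What each approach buys: yours is verifiable by a reader who has only the theorem statement, at the cost of the extra stationarity computation; the paper's is shorter and makes clear that the $M_s$-antenna configurations belong to the same family of critical points as the maxima, but it requires reaching into the theorem's proof. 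Your SNR algebra, the identification $\text{SNR}^{\mathrm{MRT}}=\beta p M/\sigma_v^2$, and the limit $\zeta\to 0$ all check out.
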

\begin{proof}
	See Appendix~\ref{appendix:proof_corollary}.
\end{proof}

\begin{figure}[t!]
	\centering 
	\resizebox{\linewidthmod}{!}{%
		{\input{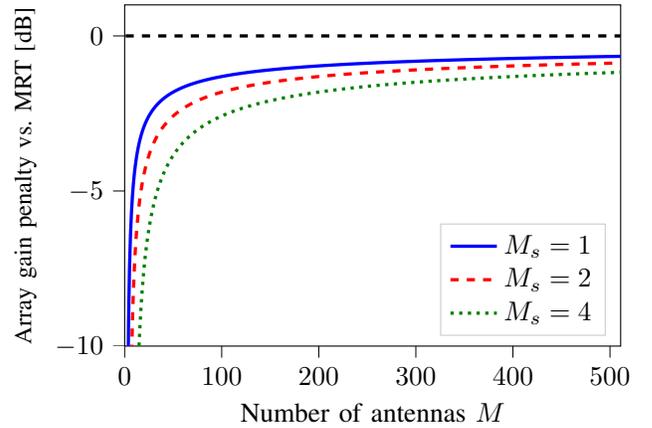}}
	} 
	\caption{As the number of antennas increases, the penalty in array gain of the Z3RO precoder vanishes, as compared to MRT. $M_s$ is the number of saturated antennas (with negative gains).}
	\label{fig:array_gain} 
	%	\vspace{-1em}
\end{figure}
\begin{figure}[t!]
	\centering 
	\subfloat[Signal radiation pattern (dB), $M=32$.]{
		\resizebox{\linewidthmodd}{!}{%
			%			{\includegraphics[scale=1]{Fig/Python/radiatioM_sig_M_32.png}} %gauche bas droite haut
			{\includegraphics[clip, trim=2.5cm 2cm 2.5cm 3cm, scale=1]{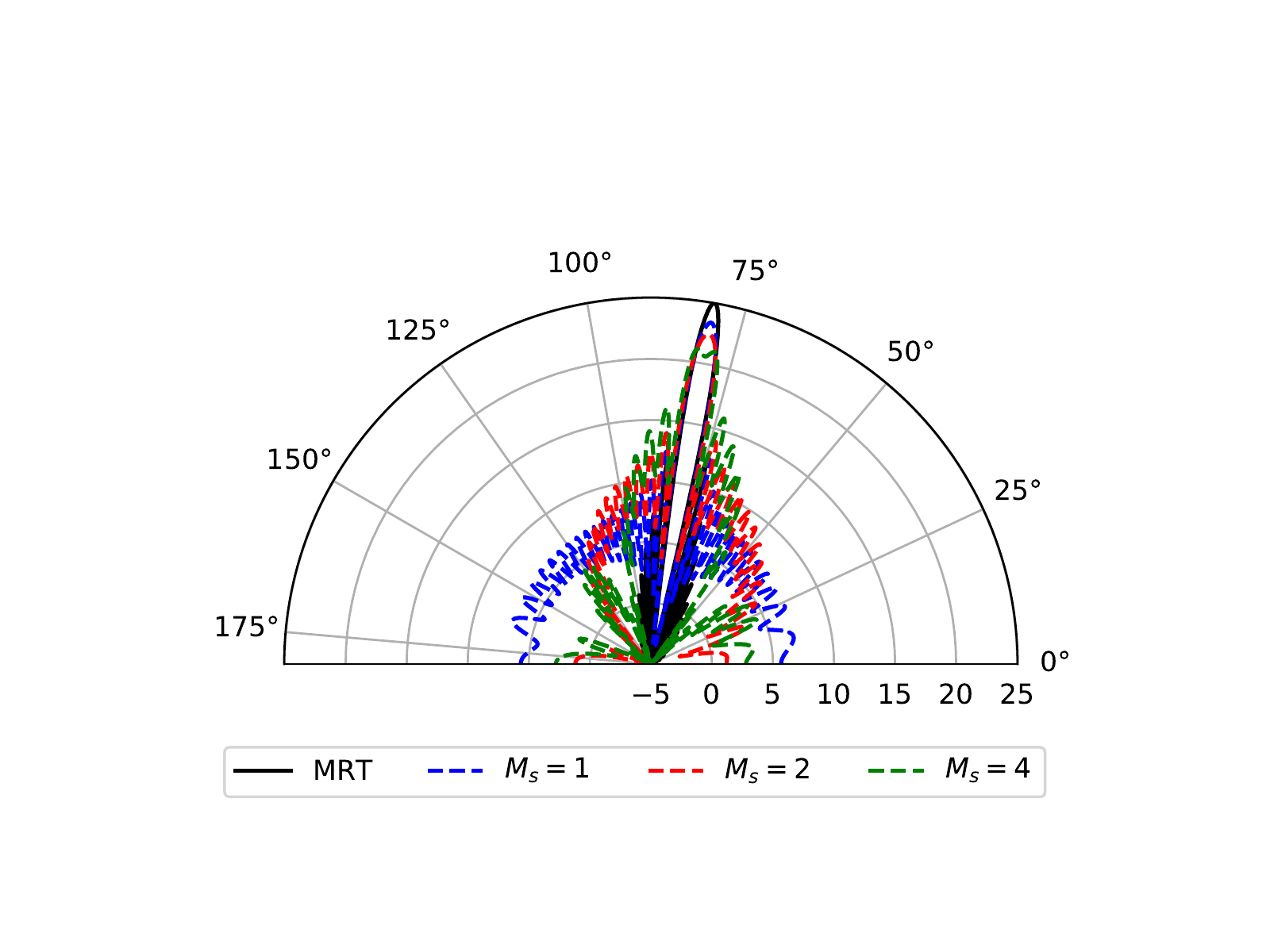}}
		}
	}
	
	\subfloat[Third-order distortion radiation pattern (dB), $M=32$.
	]{
		\resizebox{\linewidthmodd}{!}{%
			%			{\includegraphics[scale=1]{Fig/Python/radiation_dist_M_32.png}} %gauche bas droite haut
			{\includegraphics[clip, trim=2.5cm 2cm 2.5cm 3cm, scale=1]{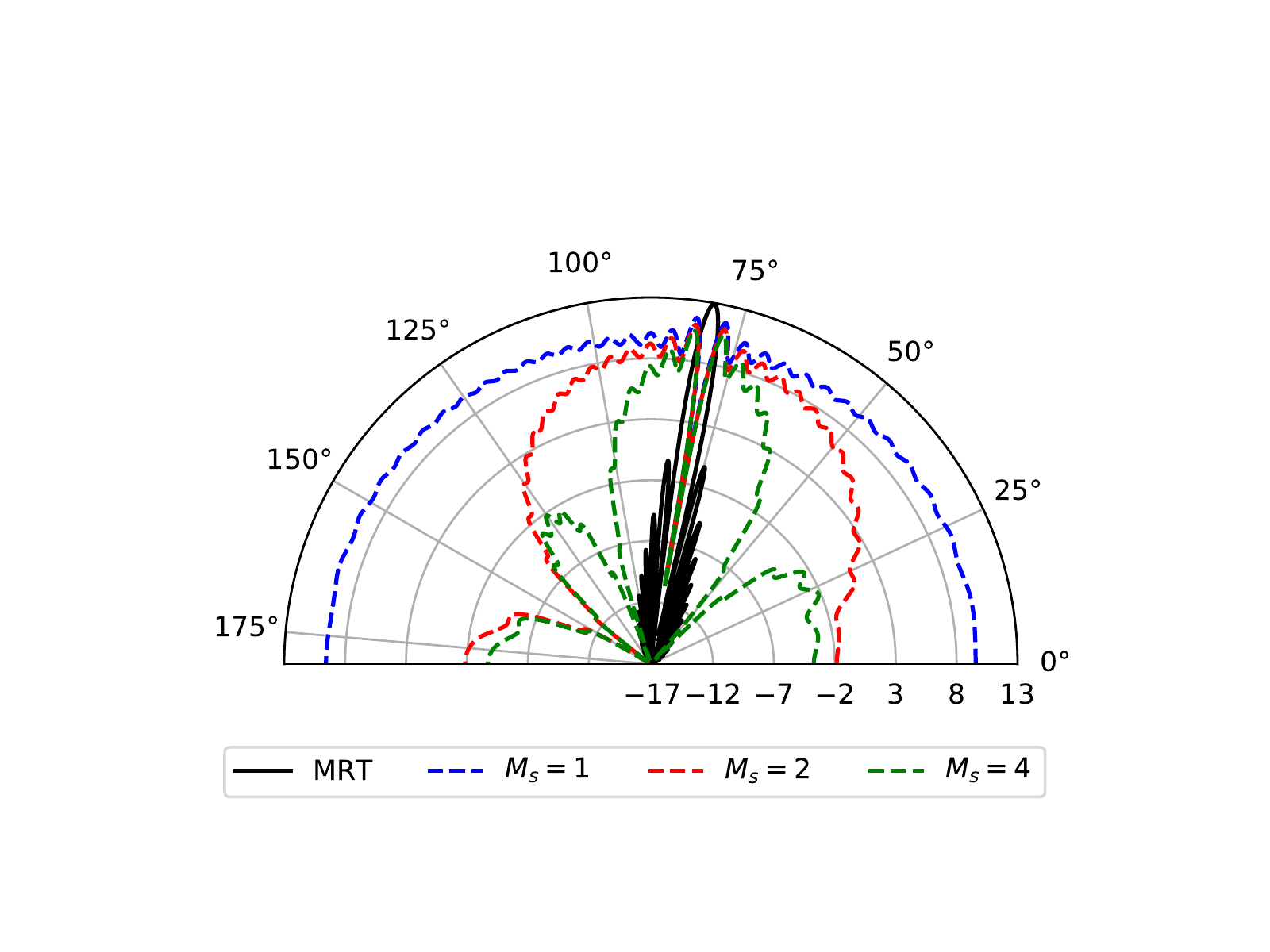}}
		}
	} 
	
	\caption{Radiation pattern for \gls{los} channel and half-wavelength \gls{ula}. As more antennas become saturated ($M_s\nearrow$), the array gain of the \gls{z3ro} precoder decreases. On the other hand, the total radiated power decreases and it becomes more spatially focused, which is beneficial regarding unintended directions.}
	\label{fig:radiation_patterns} 
%	\vspace{-1em}
\end{figure}

\begin{remark}[Saturated antennas]
	The critical points are obtained by using $M_s$ antennas with a negative gain and saturated in such a way that they compensate for the nonlinear distortion of all other antennas at the user location. Using $M_s=1$ gives the global optimum.
\end{remark}
\begin{remark}[Array gain]
	This precoder is proposed for large array systems, operating in the saturation regime, where SDR is limiting rather than SNR. As a result, the reduced array gain with respect to MRT becomes negligible. Moreover, as shown in Fig.~\ref{fig:array_gain}, this array gain penalty vanishes 	for large array systems, as $M$ grows large. For $M_s = 1$ and $M = 64$, the MRT precoder achieves an array gain of about 18 dB versus 16.5 dB for the proposed precoder, while the third distortion order is completely cancelled.
\end{remark}
\begin{remark}[Radiation pattern]
	Fig.~\ref{fig:directivity_patterns} shows directivity patterns of the signal and distortion for the MRT and the optimal precoder for $M_s=1$. On the other hand, Fig.~\ref{fig:radiation_patterns} shows their absolute radiation pattern for different values of $M_s$. In Fig.~\ref{fig:radiation_patterns}~(a), the array gain decreases when $M_s$ increases. This would imply only using the design $M_s = 1$. However, as shown in Fig.~\ref{fig:radiation_patterns}~(b), this design, as compared to MRT, leads to an increase of total distortion power, which is mainly radiated towards unintended locations. This leads to interference for potential observers, especially since PA nonlinearities induce out-of-band emissions. Hence, a user in an adjacent band could suffer from it. On the other hand, as $M_s$ increases, the distortion becomes focused ``approximately in the user direction, except for the exact user direction, where it is null by design". Moreover, the total radiated distortion power is reduced. This comes from the fact that distortion is beamformed and benefits from an array gain. As a result, choosing the value $M_s$ offers a trade-off between array gain and spatial focusing
	of the distortion.
\end{remark}
%\TODO{explain a bit, relate to graphs, say could be problematic in practice to have one too saturated and create a lot of nonlinear distortion}
%\TODO{give expression, explain after/before why interesting}
%
%\TODO{interpretation on saturated antennas, link with above graphs, and others, array gain penalty as a function of $M$}
% with an opposite phase shift and saturated in such a way that they compensate for the distortion due to all other antenna elements. This leads to the Z3RO precoder design:

\subsection{General Channel}

Inspired by the structure of critical points in the LOS case, we propose a heuristic precoder for the general channel case.

\begin{heuristic}[Z3RO precoder]\label{heuristic_Z3RO}
	The following heuristic precoder achieves a good performance. Defining a set $\mathcal{M}$ of $M_s$ antennas, chosen arbitrarily among the $M$ antennas with $M/2>M_s>0$, choose
	\begin{align}
		g_{m,\mathcal{M}}^{\mathrm{Z3RO}}	&=\alpha r_m\begin{cases}
			-\left(\frac{\sum_{m'=M_s}^{M-1} r_{m'}^4}{\sum_{m''=0}^{M_s-1} r_{m''}^4}\right)^{1/3} &\text{if}\ m \in \mathcal{M}\\
			1 &\text{otherwise}
		\end{cases}. \label{eq:heuristic_precoder}
	\end{align}	
\end{heuristic}
It will be illustrated through numerical simulations that this precoder yields a negligible array gain penalty as compared to the optimal one. The following remark and proposition give insight on why the heuristic precoder (\ref{eq:heuristic_precoder}) performs well for LOS and general channels respectively.

\begin{proposition}[Z3RO performance in LOS]
	For the pure \gls{los} channel given in (\ref{eq:channel_model_LOS}), the Z3RO precoder (\ref{eq:heuristic_precoder}) reduces to the one of Corollary~\ref{corollary_LOS}. This explains why the same precoder notation $g_{m,\mathcal{M}}^{\mathrm{Z3RO}}$ has been used. This also implies that, for $M_s=1$, it is optimal.
\end{proposition}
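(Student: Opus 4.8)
The plan is to show that specializing the heuristic precoder (\ref{eq:heuristic_precoder}) to the pure \gls{los} channel collapses it exactly onto the closed-form maximizer of Corollary~\ref{corollary_LOS}. First I would observe that under the channel model (\ref{eq:channel_model_LOS}) one has $r_m = |h_m| = \sqrt{\beta}$ for every antenna $m$, so that all channel magnitudes, and in particular all fourth powers $r_m^4 = \beta^2$, are identical. This is the single structural fact that drives the whole argument: the \gls{los} channel is the degenerate case in which the antennas are indistinguishable in magnitude, which is also why the particular choice of the saturated set $\mathcal{M}$ becomes immaterial.

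Next I would substitute $r_m^4 = \beta^2$ into the cube-root ratio appearing in (\ref{eq:heuristic_precoder}). Taking $\mathcal{M}$ to contain $M_s$ antennas, the numerator sums $\beta^2$ over the $M-M_s$ unsaturated antennas and the denominator sums $\beta^2$ over the $M_s$ saturated ones; the factors of $\beta^2$ cancel, leaving $\left(\frac{M-M_s}{M_s}\right)^{1/3}$ as the gain magnitude applied on $\mathcal{M}$. Pulling out the remaining common factor $r_m=\sqrt{\beta}$, the unnormalized heuristic coefficients are therefore $\sqrt{\beta}$ times $-\left(\frac{M-M_s}{M_s}\right)^{1/3}$ on $\mathcal{M}$ and $\sqrt{\beta}$ times $1$ elsewhere, i.e. precisely the pattern of Corollary~\ref{corollary_LOS} scaled by the single constant $\sqrt{\beta}$.

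The final step is to dispense with this common scalar. Since $\alpha$ in both expressions is defined through (\ref{eq:alpha_def}) solely to enforce $\sum_m g_m^2=1$, multiplying the unnormalized coefficients by $\sqrt{\beta}$ merely rescales $\alpha$ by $1/\sqrt{\beta}$ and leaves the normalized precoder unchanged; hence the two precoders coincide entry-for-entry, which is the stated reduction and justifies reusing the notation $g_{m,\mathcal{M}}^{\mathrm{Z3RO}}$. Optimality for $M_s=1$ then follows at once: Corollary~\ref{corollary_LOS} asserts that in \gls{los} all $M$ maxima of (\ref{eq:all_real_3rd_problem}) produced by Theorem~\ref{theorem:maxima} are global maxima, and each such maximum uses a single saturated antenna, so the $M_s=1$ Z3RO precoder is exactly one of these global maximizers.

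I expect no genuine obstacle here, as the argument is essentially a substitution. The only point that deserves care is the bookkeeping of the normalization constant $\alpha$, namely making explicit that an antenna-independent rescaling of the coefficients is absorbed by (\ref{eq:alpha_def}) and therefore does not alter the normalized precoder; once this is stated cleanly the equivalence is immediate.
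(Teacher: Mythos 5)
Your proposal is correct and is essentially the paper's own argument: the paper's proof consists of the single remark that the result follows from ``a straightforward particularization to the LOS case $r_m=\sqrt{\beta}$,'' and your substitution $r_m^4=\beta^2$ into the cube-root ratio of (\ref{eq:heuristic_precoder}), together with the observation that the common factor $\sqrt{\beta}$ is absorbed by the normalization $\alpha$ of (\ref{eq:alpha_def}), is precisely that particularization carried out explicitly. Your optimality argument for $M_s=1$ (identifying the resulting precoder with one of the globally optimal maxima of Corollary~\ref{corollary_LOS}) is likewise the intended reasoning, just stated rather than left implicit.
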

\begin{proof}
	The proof results from a straightforward particularization to the LOS case $r_m=\sqrt{\beta}$.
\end{proof}

\begin{proposition}[Z3RO performance in large antenna systems] \label{proposition:Z3RO_large_antenna}
	If the channel gains $r_m$ are independent and identically distributed (i.i.d.) with variance $\beta$, as $M$ and $M_s$ grow large with a fixed ratio $\zeta=M_s/M$, 
	\begin{align*}
		\frac{\text{SNR}^{\mathrm{Z3RO}}}{\text{SNR}_{\mathrm{LOS}}^{\mathrm{Z3RO}}} \rightarrow 1,
	\end{align*}
	where $\text{SNR}^{\mathrm{Z3RO}}$ is the SNR achieved by the heuristic precoder (\ref{eq:heuristic_precoder}) and $\text{SNR}_{\mathrm{LOS}}^{\mathrm{Z3RO}}$ is the one in the deterministic LOS case (\ref{eq:SNR_Z3RO}).
\end{proposition}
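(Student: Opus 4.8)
The plan is to reduce the statement to the strong law of large numbers (SLLN) after putting the SNR of the heuristic precoder into closed form. First I would insert the coefficients (\ref{eq:heuristic_precoder}) into the objective (\ref{eq:3rd_problem}) and impose the normalization (\ref{eq:alpha_def}). Writing the partial power sums $P_s=\sum_{m\in\mathcal{M}}r_m^2$, $P_c=\sum_{m\notin\mathcal{M}}r_m^2$, $Q_s=\sum_{m\in\mathcal{M}}r_m^4$, $Q_c=\sum_{m\notin\mathcal{M}}r_m^4$, and $c=(Q_c/Q_s)^{1/3}$ for the common scaling factor in (\ref{eq:heuristic_precoder}), the array gain is $\sum_m r_m g_m=\alpha(P_c-c\,P_s)$ while the power constraint gives $\alpha^{-2}=P_c+c^2P_s$, so
\begin{align*}
\text{SNR}^{\mathrm{Z3RO}}=\frac{p}{\sigma_v^2}\,\frac{(P_c-c\,P_s)^2}{P_c+c^2P_s}.
\end{align*}
Crucially, this is a fixed continuous function of $(P_s,P_c,Q_s,Q_c)$, and the deterministic quantity (\ref{eq:SNR_Z3RO}) is obtained by evaluating the very same function at $r_m^2\equiv\beta$, i.e.\ at $P_s=M_s\beta$, $P_c=(M-M_s)\beta$, $Q_s=M_s\beta^2$, $Q_c=(M-M_s)\beta^2$ (this is just the preceding Proposition specialized to the LOS case).

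Second, I would isolate the empirical averages. Interpreting ``variance $\beta$'' as $\mathbb{E}(r_m^2)=\beta$ (the LOS second moment), and assuming $\mathbb{E}(r_m^4)<\infty$, Kolmogorov's SLLN gives, almost surely as $M,M_s\to\infty$ with $M_s/M=\zeta$ fixed,
\begin{align*}
\frac{P_s}{M_s},\ \frac{P_c}{M-M_s}\to\beta,\qquad \frac{Q_s}{M_s},\ \frac{Q_c}{M-M_s}\to\mathbb{E}(r_m^4).
\end{align*}
Since each sum factors as (number of terms)$\times$(empirical average), the unknown moment $\mathbb{E}(r_m^4)$ cancels in $c=\big(\tfrac{M-M_s}{M_s}\cdot\tfrac{Q_c/(M-M_s)}{Q_s/M_s}\big)^{1/3}$, so $c\to(\tfrac{1-\zeta}{\zeta})^{1/3}$, which is exactly the LOS coefficient $c^{\mathrm{LOS}}$ of Corollary~\ref{corollary_LOS}.

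Third, I would conclude by the continuous-mapping theorem. Writing $P_s=\beta M_s\,u_1$, $P_c=\beta(M-M_s)\,u_2$, $Q_s=\mathbb{E}(r_m^4)M_s\,u_3$, $Q_c=\mathbb{E}(r_m^4)(M-M_s)\,u_4$ with $u_1,u_2,u_3,u_4\to 1$ a.s., the explicit factors of $M$, $M_s$ and of $\beta,\mathbb{E}(r_m^4)$ cancel between $\text{SNR}^{\mathrm{Z3RO}}$ and $\text{SNR}_{\mathrm{LOS}}^{\mathrm{Z3RO}}$, so the ratio is a fixed continuous function $G(\zeta;u_1,u_2,u_3,u_4)$ with $G(\zeta;1,1,1,1)=1$; hence the ratio $\to 1$ almost surely. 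The only points requiring care -- where I would focus the argument -- are that $G$ is well defined in the limit, i.e.\ the limiting denominator stays bounded away from zero and the limiting SNR is strictly positive: both hold because the constraint $M/2>M_s$ forces $\zeta<1/2$, so $\zeta^{2/3}\neq(1-\zeta)^{2/3}$ and the factor $(\zeta^{2/3}-(1-\zeta)^{2/3})^2$ appearing in (\ref{eq:SNR_Z3RO}) is strictly positive. The remaining work is pure bookkeeping: checking that the finite fourth-moment hypothesis suffices for the SLLN on the $r_m^4$, and that almost-sure convergence of the individual averages transfers to the ratio, which it does because the limit is a deterministic positive constant.
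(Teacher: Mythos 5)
Your proof is correct and takes essentially the same route as the paper's: both write $\text{SNR}^{\mathrm{Z3RO}}$ in closed form in terms of the partial sums of $r_m^2$ and $r_m^4$, apply the law of large numbers to the empirical averages so that the saturated-antenna scaling factor converges to its LOS value $\left(\frac{1-\zeta}{\zeta}\right)^{1/3}$, and conclude by continuity that the ratio of SNRs tends to one. The only difference is that you spell out conditions the paper leaves implicit (finite fourth moment $\mathbb{E}(r_m^4)<\infty$ for the SLLN, and $\zeta<1/2$ so that the limiting denominator and the factor $\left(\zeta^{2/3}-(1-\zeta)^{2/3}\right)^{2}$ stay away from zero), which strengthens rather than changes the argument.
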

\begin{proof}
	See Appendix~\ref{appendix:proof_proposition}.
\end{proof}

The previous propositions imply that the performance of the heuristic Z3RO precoder for a general channel and a large antenna system are close to the one of the optimal precoder in LOS channel.

%\begin{proposition}
%	The extension of the Z3RO precoder to a general channel is obtained by using a number of antennas $M_s$ with an opposite phase shift and saturated in such a way that they compensate for the distortion due to all other antenna elements, giving
%	\begin{align*}
%		w_m^{\mathrm{Z3RO},M_s}	&=\alpha h_m^*\begin{cases}
%			\gamma &\text{if}\ m=0,...,M_s-1\\
%			1 &\text{otherwise}
%		\end{cases},
%	\end{align*}
%	where $\gamma$ is the real and negative gain of the saturated antennas
%	\begin{align*}
%		\gamma&=-\left(\frac{\sum_{m'=M_s}^{M-1} |h_{m'}|^4}{\sum_{m''=0}^{M_s-1} |h_{m''}|^4}\right)^{1/3}
%	\end{align*}
%	and $\alpha$ is a power normalization constraint given by
%	\begin{align*}
%		\alpha=\frac{\sqrt{M}}{\sqrt{\sum_{m'=M_s}^{M-1} |h_{m'}|^2+\gamma^2\sum_{m=0}^{M_s-1} |h_{m}|^2}}.
%	\end{align*}
%	The result of Theorem~\ref{theorem:3rd_order} is found back in the particular case of a LoS channel. 
%	
%	Moreover, if the channel gains $h_m$ are assumed to be independent and identically distributed (i.i.d.) and so that $\mathbb{E}(|h_m|^2)=\beta$, as $M$ and $M_s$ grow large, the Z3RO precoder achieves the same SNR performance as the one in the LoS case
%	\begin{align*}
%		\text{SNR}&\rightarrow \frac{\beta p}{\sigma_v^2}M\frac{(M-M_s-M_s^{2/3}(M-M_s)^{1/3})^2}{M-M_s+M_s^{1/3}(M-M_s)^{2/3}}.
%	\end{align*}
%	Again, this implies that, for a fixed and sufficiently large $M_s$, $\lim_{M\rightarrow +\infty} \text{SNR}/\text{SNR}_{\mathrm{MRT}}= 1$, implying that the reduction of array gain becomes negligible while the third distortion order is null.
%	
%\end{proposition}

\section{Simulation Results}
\label{section:Simulation_results}

This section provides a simulation-based validation of the proposed precoders. The \gls{mrt} is used as a benchmark. Two types of channels are considered. Firstly, the pure LOS channel given in (\ref{eq:channel_model_LOS}) with $\beta=1$. The phases $\phi_m$ are set to zero. Another choice would not have affected performance. Secondly, a general channel is used with i.i.d. Rayleigh distributed components. This often considered model does not assume any spatial correlation between antennas. %This is chosen to simplify the exposure and justified by the fact that a single user is considered.
Performance evaluation of the Z3RO precoder based on real-life channel measurements was performed in~\cite{feys2022measurementbased}.

\subsection{Comparison of Local and Global Maxima}

As explained in Corollary~\ref{corollary_LOS}, in pure LOS, all maxima of (\ref{eq:all_real_3rd_problem}) are equally optimal. On the other hand, it is not the case for a general channel. Fig.~\ref{fig:comparison_maxima} shows the array gain achieved by the $M$ maxima described in Theorem~\ref{theorem:maxima}. Indices of antennas are sorted by magnitude of their channel gain, which is also plotted with respect to the right $y$ axis. One can check that the maxima obtained by using the saturated antenna with maximum or minimum channel gain is further from the global maximum. On the other hand, choosing any of the antennas close to the median of the channel gain gives a performance close to the optimal one. In particular, antennas with a relatively low channel gain should be avoided as they require a very high gain to compensate for the distortion of all other antennas. This gives a general and practical guideline to choose which antenna has to be saturated, without having to compare all maxima, as discussed in Remark~\ref{remark:global_optimum}. 

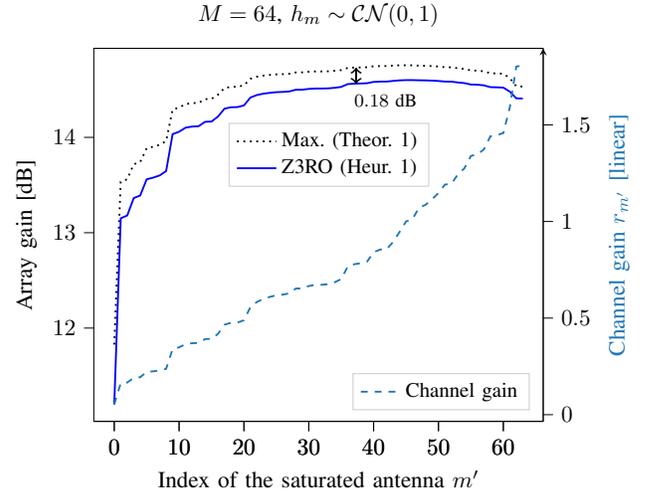
\begin{figure}[t!]
	\centering 
	
	\resizebox{\linewidthmod}{!}{
	% This file was created with tikzplotlib v0.10.1.
\begin{tikzpicture}

\definecolor{darkgray176}{RGB}{176,176,176}
\definecolor{lightgray204}{RGB}{204,204,204}
\definecolor{steelblue31119180}{RGB}{31,119,180}

\begin{axis}[
legend cell align={left},
legend style={
  fill opacity=0.8,
  draw opacity=1,
  text opacity=1,
  at={(0.3,0.8)},
  anchor=north west,
  draw=lightgray204
},
tick align=outside,
tick pos=left,
x grid style={darkgray176},
xlabel={Index of the saturated antenna $m'$},
xmin=-3.15, xmax=66.15,
xtick style={color=black},
y grid style={darkgray176},
ylabel={Array gain [dB]},
ymin=11.0217898956209, ymax=14.9337618086776,
ytick style={color=black}
]
\addplot [thick, black, dotted]
table {%
0 11.8277504002775
1 13.5352708731257
2 13.5604671770224
3 13.7175356208094
4 13.7412925233039
5 13.8868950596224
6 13.9018594420096
7 13.9216323806003
8 13.9608560846731
9 14.2914801372039
10 14.3128232308411
11 14.348823192979
12 14.3588889614734
13 14.3616226374718
14 14.4006666698685
15 14.4032222109243
16 14.4475366912249
17 14.5168570325664
18 14.5294829923415
19 14.5307213055625
20 14.5452658377746
21 14.6142543914809
22 14.6359253091695
23 14.6477059728834
24 14.6546960387103
25 14.6604890008024
26 14.6625259568423
27 14.6667459932429
28 14.681148163133
29 14.6815367197642
30 14.6881414195965
31 14.6912835082615
32 14.6914674174004
33 14.6928278457419
34 14.6977357968997
35 14.7050093708286
36 14.7281770370605
37 14.7317421018347
38 14.7330817257816
39 14.7353301707856
40 14.7454186238176
41 14.7475848667347
42 14.7478330702395
43 14.7524310793217
44 14.7557500756077
45 14.7559449035386
46 14.7555559367921
47 14.7517588915686
48 14.7503374873204
49 14.7470910025853
50 14.7425748561054
51 14.7353105741942
52 14.7345595145933
53 14.7248566703501
54 14.7178616026786
55 14.7018540761568
56 14.6997073221593
57 14.6926521823068
58 14.6723545813345
59 14.6703793037035
60 14.6669812701731
61 14.6209281695798
62 14.5359233109261
63 14.5328914702699
};
\addlegendentry{\small Max. (Theor. 1)}
\addplot [thick, blue]%, mark=x, mark size=2, mark options={solid}]
table {%
0 11.1996068007599
1 13.1512485917305
2 13.1804601510426
3 13.3628336077322
4 13.3904599319167
5 13.5600295013898
6 13.5774825416837
7 13.6005514092341
8 13.646339361984
9 14.0339329920592
10 14.0590811194105
11 14.1015466163793
12 14.1134316542967
13 14.1166603252554
14 14.1628196718187
15 14.1658440685245
16 14.2183578202589
17 14.3008205905089
18 14.3158922369886
19 14.3173714287875
20 14.3347596250147
21 14.4176901850986
22 14.4439522366172
23 14.4582888397096
24 14.4668192423869
25 14.4739038145168
26 14.476398431509
27 14.4815727940608
28 14.4993026516391
29 14.4997826879355
30 14.5079580955064
31 14.5118586492501
32 14.5120871932012
33 14.5137786570748
34 14.5198941285471
35 14.5290001930581
36 14.5585084698828
37 14.5631551856623
38 14.5649124884143
39 14.5678779971622
40 14.5815467317036
41 14.5846082589399
42 14.5849637338216
43 14.5918336630212
44 14.5978036253822
45 14.6002855677163
46 14.6003620180839
47 14.5984806189163
48 14.5974756649165
49 14.5949751051601
50 14.5912341011829
51 14.5848941254387
52 14.5842245093525
53 14.5754318598829
54 14.5689874185985
55 14.5540953044418
56 14.5520916640821
57 14.5455060443576
58 14.5266184284887
59 14.5247892238545
60 14.5216473946103
61 14.4799130175415
62 14.4093620380165
63 14.4070521727932
};
\addlegendentry{\small Z3RO (Heur. 1)}
\end{axis}

\begin{axis}[
axis y line=right,
legend cell align={left},
legend style={
  fill opacity=0.8,
  draw opacity=1,
  text opacity=1,
  at={(0.97,0.03)},
  anchor=south east,
  draw=lightgray204
},
tick align=outside,
x grid style={darkgray176},
xmin=-3.15, xmax=66.15,
xtick pos=left,
xtick style={color=black},
y grid style={darkgray176},
ylabel=\textcolor{steelblue31119180}{Channel gain $r_{m'}$ [linear]},
title={$M=64$, $h_m \sim \mathcal{CN}(0,1)$},
ymin=-0.0345559425170639, ymax=1.8963685430692,
ytick pos=right,
ytick style={color=black},
yticklabel style={anchor=west}
]
\addplot [thick, dashed, steelblue31119180]
table {%
0 0.053213352282312
1 0.163214244635159
2 0.166584036192566
3 0.190058650274681
4 0.194027648183967
5 0.221305207674275
6 0.224434436025805
7 0.22867466878558
8 0.237461545242343
9 0.339660815011991
10 0.348766107325234
11 0.365154168953996
12 0.369988980521043
13 0.371322381303105
14 0.391389493605283
15 0.392774094613704
16 0.418373855561396
17 0.465848978787418
18 0.475737060338957
19 0.476731316967058
20 0.488759045763364
21 0.55691539588424
22 0.583560836920491
23 0.599545738989174
24 0.609625101938071
25 0.618355907461482
26 0.621513605151698
27 0.62821045950947
28 0.65284310533453
29 0.653550454968265
30 0.665969753203277
31 0.672160417777694
32 0.672528831192371
33 0.675275704952579
34 0.685519064739124
35 0.701780015883239
36 0.766220793467463
37 0.778787679123411
38 0.783789832108259
39 0.792584312016267
40 0.841036073983063
41 0.854558567193803
42 0.856223102802192
43 0.894018071201855
44 0.945834331058256
45 1.00018010802579
46 1.01272565167871
47 1.07158224247022
48 1.08625448686505
49 1.11461750064647
50 1.14731437455377
51 1.19114780539985
52 1.19527203677484
53 1.24401338457075
54 1.27532286800202
55 1.33936428790193
56 1.34735091505926
57 1.37281583734848
58 1.44073995720981
59 1.44700702546712
60 1.45766772805103
61 1.59052061992992
62 1.80159140106682
63 1.80859924826983
};
\addlegendentry{\small Channel gain}
\end{axis}
\draw [<->,thick](4,5.15) -- (4,5.4);
\node at (4.45,4.9) {\footnotesize $0.18$ dB};

\end{tikzpicture}
	}
	
	\caption{Array gain as a function of the saturated antenna $m'$ for the maxima described in Theorem~\ref{theorem:maxima} and the Z3RO precoder in Heuristic~\ref{heuristic_Z3RO} ($\mathcal{M}=\{m'\}$). The right $y$ axis gives the corresponding channel gain $r_{m'}=|h_{m'}|$. Antenna indices are sorted from lowest channel gain to highest.}
	\label{fig:comparison_maxima} 
	%	\vspace{-1em}
\end{figure}

\subsection{Comparison of Heuristic Z3RO versus Optimal Precoder}

Fig.~\ref{fig:comparison_maxima} additionally shows the array gain achieved by the heuristic Z3RO precoder described in Heuristic~\ref{heuristic_Z3RO}, using the same antenna as the saturated one. The Z3RO precoder achieves a close-to-optimal performance, with a penalty of about $0.18$ dB.

\subsection{\gls{snr}, \gls{sdr} and \gls{sndr} for Practical PA Models}

\begin{figure}[t!]
	\centering 
	\resizebox{\linewidthmodd}{!}{%
		{\footnotesize % This file was created with tikzplotlib v0.10.1.
\begin{tikzpicture}

\definecolor{darkgray176}{RGB}{176,176,176}
\definecolor{lightgray204}{RGB}{204,204,204}

\begin{axis}[
legend cell align={left},
legend style={
  fill opacity=0.8,
  draw opacity=1,
  text opacity=1,
  at={(0.6,0.2)},
  anchor=north west,
  draw=lightgray204
},
tick align=outside,
tick pos=left,
x grid style={darkgray176},
xlabel={Precoded signal \(\displaystyle x_m\)},
xmin=0, xmax=3,
xtick style={color=black},
title={$p_{\mathrm{SAT}}=1$, unit linear gain},
y grid style={darkgray176},
ylabel={PA output \(\displaystyle y_m\)},
ymin=-0.05, ymax=1.05,
ytick style={color=black}
]
\addplot [thick, black]
table {%
0 0
0.0303030303030303 0.0303030303030303
0.0606060606060606 0.0606060606060606
0.0909090909090909 0.0909090909090909
0.121212121212121 0.121212121212121
0.151515151515152 0.151515151515152
0.181818181818182 0.181818181818182
0.212121212121212 0.212121212121212
0.242424242424242 0.242424242424242
0.272727272727273 0.272727272727273
0.303030303030303 0.303030303030303
0.333333333333333 0.333333333333333
0.363636363636364 0.363636363636364
0.393939393939394 0.393939393939394
0.424242424242424 0.424242424242424
0.454545454545455 0.454545454545455
0.484848484848485 0.484848484848485
0.515151515151515 0.515151515151515
0.545454545454545 0.545454545454545
0.575757575757576 0.575757575757576
0.606060606060606 0.606060606060606
0.636363636363636 0.636363636363636
0.666666666666667 0.666666666666667
0.696969696969697 0.696969696969697
0.727272727272727 0.727272727272727
0.757575757575758 0.757575757575758
0.787878787878788 0.787878787878788
0.818181818181818 0.818181818181818
0.848484848484849 0.848484848484849
0.878787878787879 0.878787878787879
0.909090909090909 0.909090909090909
0.939393939393939 0.939393939393939
0.96969696969697 0.96969696969697
1 1
1.03030303030303 1
1.06060606060606 1
1.09090909090909 1
1.12121212121212 1
1.15151515151515 1
1.18181818181818 1
1.21212121212121 1
1.24242424242424 1
1.27272727272727 1
1.3030303030303 1
1.33333333333333 1
1.36363636363636 1
1.39393939393939 1
1.42424242424242 1
1.45454545454545 1
1.48484848484848 1
1.51515151515152 1
1.54545454545455 1
1.57575757575758 1
1.60606060606061 1
1.63636363636364 1
1.66666666666667 1
1.6969696969697 1
1.72727272727273 1
1.75757575757576 1
1.78787878787879 1
1.81818181818182 1
1.84848484848485 1
1.87878787878788 1
1.90909090909091 1
1.93939393939394 1
1.96969696969697 1
2 1
2.03030303030303 1
2.06060606060606 1
2.09090909090909 1
2.12121212121212 1
2.15151515151515 1
2.18181818181818 1
2.21212121212121 1
2.24242424242424 1
2.27272727272727 1
2.3030303030303 1
2.33333333333333 1
2.36363636363636 1
2.39393939393939 1
2.42424242424242 1
2.45454545454545 1
2.48484848484848 1
2.51515151515152 1
2.54545454545455 1
2.57575757575758 1
2.60606060606061 1
2.63636363636364 1
2.66666666666667 1
2.6969696969697 1
2.72727272727273 1
2.75757575757576 1
2.78787878787879 1
2.81818181818182 1
2.84848484848485 1
2.87878787878788 1
2.90909090909091 1
2.93939393939394 1
2.96969696969697 1
3 1
};
\addlegendentry{Soft limiter}
\addplot [thick, black, dashed]
table {%
0 0
0.0303030303030303 0.0303030239149542
0.0606060606060606 0.0606058561892417
0.0909090909090909 0.0909075386720451
0.121212121212121 0.121205580701156
0.151515151515152 0.151495195339692
0.181818181818182 0.181768542012363
0.212121212121212 0.212013983319263
0.242424242424242 0.242215368530003
0.272727272727273 0.272351361934763
0.303030303030303 0.302394840562033
0.333333333333333 0.33231239229804
0.363636363636364 0.362063951475369
0.393939393939394 0.391602613620419
0.424242424242424 0.420874673138954
0.454545454545455 0.449819926042225
0.484848484848485 0.478372273199711
0.515151515151515 0.506460647187157
0.545454545454545 0.534010267335442
0.575757575757576 0.560944203746401
0.606060606060606 0.587185203612682
0.636363636363636 0.612657705068758
0.666666666666667 0.637289938766844
0.696969696969697 0.661015999452479
0.727272727272727 0.683777762540869
0.757575757575758 0.705526526263034
0.787878787878788 0.726224278586275
0.818181818181818 0.74584451784376
0.848484848484849 0.764372592950344
0.878787878787879 0.781805568197406
0.909090909090909 0.798151653733248
0.939393939393939 0.813429271635689
0.96969696969697 0.827665846241397
1 0.840896415253715
1.03030303030303 0.853162155966903
1.06060606060606 0.864508910823073
1.09090909090909 0.874985781190582
1.12121212121212 0.884643840494158
1.15151515151515 0.893535000019517
1.18181818181818 0.901711044606454
1.21212121212121 0.90922284208869
1.24242424242424 0.91611972016984
1.27272727272727 0.922448997400592
1.3030303030303 0.928255650690048
1.33333333333333 0.933582099830834
1.36363636363636 0.938468089284836
1.39393939393939 0.942950648439418
1.42424242424242 0.947064113253871
1.45454545454545 0.950840194317809
1.48484848484848 0.954308078575153
1.51515151515152 0.957494554149566
1.54545454545455 0.960424149727212
1.57575757575758 0.963119281749294
1.60606060606061 0.965600404215414
1.63636363636364 0.967886157199757
1.66666666666667 0.969993511250837
1.6969696969697 0.971937905706095
1.72727272727273 0.97373337963236
1.75757575757576 0.975392694629946
1.78787878787879 0.976927449138296
1.81818181818182 0.978348184178118
1.84848484848485 0.979664480679676
1.87878787878788 0.980885048696679
1.90909090909091 0.982017808904634
1.93939393939394 0.983069966843572
1.96969696969697 0.984048080397132
2 0.984958121010905
2.03030303030303 0.985805529148499
2.06060606060606 0.986595264468736
2.09090909090909 0.987331851185205
2.12121212121212 0.988019419042784
2.15151515151515 0.988661740316698
2.18181818181818 0.989262263209641
2.21212121212121 0.989824141992506
2.24242424242424 0.990350264205097
2.27272727272727 0.990843275205261
2.3030303030303 0.991305600328551
2.33333333333333 0.991739464895914
2.36363636363636 0.992146912284164
2.39393939393939 0.992529820253011
2.42424242424242 0.992889915703274
2.45454545454545 0.993228788023467
2.48484848484848 0.99354790116606
2.51515151515152 0.993848604580428
2.54545454545455 0.994132143116538
2.57575757575758 0.994399666001742
2.60606060606061 0.994652234982604
2.63636363636364 0.994890831714216
2.66666666666667 0.995116364471027
2.6969696969697 0.995329674245623
2.72727272727273 0.995531540295084
2.75757575757576 0.995722685188486
2.78787878787879 0.995903779403611
2.81818181818182 0.996075445516119
2.84848484848485 0.996238262019981
2.87878787878788 0.996392766814144
2.90909090909091 0.996539460386815
2.93939393939394 0.996678808725679
2.96969696969697 0.996811245979507
3 0.996937176894121
};
\addlegendentry{Rapp, $S=2$}
\end{axis}

\end{tikzpicture}}
		%		{\includegraphics[scale=1]{Fig/Python/SDR_SNR_SNDR.pdf}}
	} 
	\caption{Two power amplifier models: soft limiter \cite{Tellado2003} and Rapp \cite{Rapp1991EffectsOH}.}
	\label{fig:PA_models} 
	\vspace{-1em}
\end{figure}

%\begin{figure}[t!]
%	\centering 
%	\resizebox{\linewidthmod}{!}{%
%		{\footnotesize \input{Fig/SDR_SNR_SNDR.tex}}
%		%		{\includegraphics[scale=1]{Fig/Python/SDR_SNR_SNDR.pdf}}
%	} 
%	\caption{\gls{snr}, \gls{sdr} and \gls{sndr} of the \gls{mrt} versus \gls{z3ro} precoders as a function of the back-off (fixed $p_{\mathrm{PA}}=p/M$, varying $p_{\mathrm{SAT}}$). The \gls{z3ro} precoder outperforms \gls{mrt} in the saturation regime.}
%	\label{fig:SDR_SNR_SNDR} 
%	\vspace{-1em}
%\end{figure}

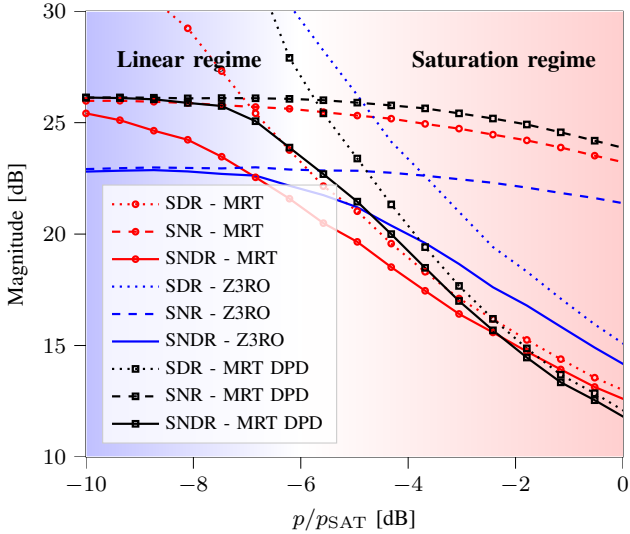
\begin{figure}[t!]
\centering 
\resizebox{\linewidthmod}{!}{%
	{\footnotesize % This file was created with tikzplotlib v0.10.1.
\begin{tikzpicture}

\definecolor{darkgray176}{RGB}{176,176,176}
\definecolor{green01270}{RGB}{0,127,0}
\definecolor{lightgray204}{RGB}{204,204,204}

\begin{axis}[
legend cell align={left},
legend style={
  fill opacity=0.8,
  draw opacity=1,
  text opacity=1,
  at={(0.03,0.03)},
  anchor=south west,
  draw=lightgray204
},
tick align=outside,
tick pos=left,
x grid style={darkgray176},
xlabel={\(\displaystyle p/p_{\mathrm{SAT}}\) [dB]},
xmin=-10, xmax=0,
xtick style={color=black},
y grid style={darkgray176},
ylabel={Magnitude [dB]},
ymin=10, ymax=30,
ytick style={color=black},
title={$M=64$, $h_m\sim$ LOS, $M_s=4$, Rapp PA, $\frac{M\beta p}{\sigma_v^2}=26$ dB}
]
\fill[draw=white,shade, right color=blue!0, left color=blue!25]
(axis cs:-10,10)--(axis cs:-6,10)--
(axis cs:-6,30)--(axis cs:-10,30)--cycle;
\fill[draw=white,shade, right color=red!25, left color=red!0]
(axis cs:-6,10)--(axis cs:2,10)--
(axis cs:2,30)--(axis cs:-6,30)--cycle;
\addplot [thick, red, dotted, mark=o, mark size=1, mark options={solid}]
table {%
-10 34.5816893988859
-9.36842105263158 32.5129195042852
-8.73684210526316 30.5174679538425
-8.10526315789474 29.2338191448951
-7.47368421052632 27.3117759132804
-6.84210526315789 25.4169303802732
-6.21052631578947 23.7694003664272
-5.57894736842105 22.1510525436401
-4.94736842105263 21.0253271903681
-4.31578947368421 19.5723339501559
-3.68421052631579 18.3083380794855
-3.05263157894737 17.1091744341962
-2.42105263157895 16.1896690706042
-1.78947368421053 15.2509901350176
-1.15789473684211 14.3883543857501
-0.526315789473685 13.556198210847
0.105263157894736 12.892607407656
0.736842105263158 12.1468117343121
1.36842105263158 11.5012519086046
2 10.9952825585972
};
\addlegendentry{\scriptsize SDR - MRT}
\addplot [thick, red, dashed, mark=o, mark size=1, mark options={solid}]
table {%
-10 25.9772829892718
-9.36842105263158 25.9825432958798
-8.73684210526316 25.9375773760679
-8.10526315789474 25.882254487302
-7.47368421052632 25.7853203278207
-6.84210526315789 25.7022212649179
-6.21052631578947 25.6234045038035
-5.57894736842105 25.4729024302207
-4.94736842105263 25.3173119123833
-4.31578947368421 25.179211702905
-3.68421052631579 24.9444227975351
-3.05263157894737 24.7317460067157
-2.42105263157895 24.4632083010364
-1.78947368421053 24.2039313173153
-1.15789473684211 23.8845941024612
-0.526315789473685 23.5188875281066
0.105263157894736 23.1699408302957
0.736842105263158 22.760199152079
1.36842105263158 22.3405310520237
2 21.9093008808243
};
\addlegendentry{\scriptsize SNR - MRT}
\addplot [thick, red, mark=o, mark size=1, mark options={solid}]
table {%
-10 25.4162479618868
-9.36842105263158 25.1107234991459
-8.73684210526316 24.6395635698059
-8.10526315789474 24.2321437235089
-7.47368421052632 23.4715258472407
-6.84210526315789 22.5469336689509
-6.21052631578947 21.5879103837624
-5.57894736842105 20.4915267739569
-4.94736842105263 19.651084796922
-4.31578947368421 18.5172762506906
-3.68421052631579 17.4555539050062
-3.05263157894737 16.4166414437536
-2.42105263157895 15.5871689140458
-1.78947368421053 14.7307334051453
-1.15789473684211 13.9261387773134
-0.526315789473685 13.1388662024901
0.105263157894736 12.5031734846147
0.736842105263158 11.7852023809964
1.36842105263158 11.1572637673312
2 10.6569408256754
};
\addlegendentry{\scriptsize SNDR - MRT}
\addplot [thick, blue, dotted]
table {%
-10 38.7385756505636
-9.36842105263158 38.7171683856068
-8.73684210526316 38.5855624752742
-8.10526315789474 37.3321337926969
-7.47368421052632 35.274148037106
-6.84210526315789 33.471226554394
-6.21052631578947 30.2847957625869
-5.57894736842105 28.2309883403702
-4.94736842105263 26.2353531199491
-4.31578947368421 24.1452280227241
-3.68421052631579 22.5970780493386
-3.05263157894737 20.998996733226
-2.42105263157895 19.418371775471
-1.78947368421053 18.3162457813241
-1.15789473684211 17.1016880893191
-0.526315789473685 15.9490575758975
0.105263157894736 14.9238785803823
0.736842105263158 14.0515884886325
1.36842105263158 13.1806070814622
2 12.4574266755104
};
\addlegendentry{\scriptsize SDR - Z3RO}
\addplot [thick, blue, dashed]
table {%
-10 22.91609130694
-9.36842105263158 22.9573189384525
-8.73684210526316 22.9886978525372
-8.10526315789474 22.9673434005012
-7.47368421052632 22.9471112231412
-6.84210526315789 22.9963526168858
-6.21052631578947 22.896761744779
-5.57894736842105 22.8534815662805
-4.94736842105263 22.8419562767834
-4.31578947368421 22.7520669382261
-3.68421052631579 22.6192216783217
-3.05263157894737 22.4613036677315
-2.42105263157895 22.3011068308805
-1.78947368421053 22.0861510661759
-1.15789473684211 21.8443321529336
-0.526315789473685 21.6152781756488
0.105263157894736 21.3398666760072
0.736842105263158 20.9876908385693
1.36842105263158 20.6368416802692
2 20.257150083302
};
\addlegendentry{\scriptsize SNR - Z3RO}
\addplot [thick, blue]
table {%
-10 22.8039114681556
-9.36842105263158 22.8435306590574
-8.73684210526316 22.8706167453798
-8.10526315789474 22.8112172757461
-7.47368421052632 22.7001249639912
-6.84210526315789 22.6235133909333
-6.21052631578947 22.1688525221334
-5.57894736842105 21.7479186502544
-4.94736842105263 21.2050254067927
-4.31578947368421 20.3827218890512
-3.68421052631579 19.5978357940772
-3.05263157894737 18.6585923841207
-2.42105263157895 17.6145198745394
-1.78947368421053 16.7940715230416
-1.15789473684211 15.8451409648406
-0.526315789473685 14.9067304344783
0.105263157894736 14.0310286843066
0.736842105263158 13.2507970180006
1.36842105263158 12.4631525404707
2 11.7905556663384
};
\addlegendentry{\scriptsize SNDR - Z3RO}
\addplot [thick, black, dotted, mark=square, mark size=1, mark options={solid}]
table {%
-10 156.537026239251
-9.36842105263158 48.1485674880786
-8.73684210526316 43.9847117957292
-8.10526315789474 39.2874639846791
-7.47368421052632 36.7994370677645
-6.84210526315789 31.8200994095186
-6.21052631578947 27.9080056034345
-5.57894736842105 25.4166427177953
-4.94736842105263 23.3905050682437
-4.31578947368421 21.3329077720809
-3.68421052631579 19.414975447473
-3.05263157894737 17.6733438928704
-2.42105263157895 16.1967814431328
-1.78947368421053 14.8707852224946
-1.15789473684211 13.6887306538118
-0.526315789473685 12.8552451153728
0.105263157894736 11.9414822633292
0.736842105263158 11.1884269927887
1.36842105263158 10.4777489520664
2 9.98228159623912
};
\addlegendentry{\scriptsize SDR - MRT DPD}
\addplot [thick, black, dashed, mark=square, mark size=1, mark options={solid}]
table {%
-10 26.1250279736585
-9.36842105263158 26.1321427927262
-8.73684210526316 26.1178862350014
-8.10526315789474 26.0896435219713
-7.47368421052632 26.1028737576019
-6.84210526315789 26.0950638092948
-6.21052631578947 26.0680008177607
-5.57894736842105 26.0064586563617
-4.94736842105263 25.8984424659656
-4.31578947368421 25.7745772408787
-3.68421052631579 25.6405361869688
-3.05263157894737 25.4180826807485
-2.42105263157895 25.185701904534
-1.78947368421053 24.9176612145844
-1.15789473684211 24.5660394329505
-0.526315789473685 24.190759106195
0.105263157894736 23.8068601379359
0.736842105263158 23.3595089194379
1.36842105263158 22.8973475713315
2 22.3912068257377
};
\addlegendentry{\scriptsize SNR - MRT DPD}
\addplot [thick, black, mark=square, mark size=1, mark options={solid}]
table {%
-10 26.1250279736581
-9.36842105263158 26.1049295391439
-8.73684210526316 26.0474855369692
-8.10526315789474 25.8864988485412
-7.47368421052632 25.7478512112126
-6.84210526315789 25.0652196946303
-6.21052631578947 23.8809775636618
-5.57894736842105 22.6912455429886
-4.94736842105263 21.4556011411645
-4.31578947368421 19.9987608501606
-3.68421052631579 18.4861009469262
-3.05263157894737 16.9986035060237
-2.42105263157895 15.6805717512428
-1.78947368421053 14.4610989787222
-1.15789473684211 13.3476268066904
-0.526315789473685 12.5471141963982
0.105263157894736 11.6676510042818
0.736842105263158 10.9326711452936
1.36842105263158 10.2358286664057
2 9.73978234158121
};
\addlegendentry{\scriptsize SNDR - MRT DPD}
\draw (axis cs:-4,27.5) node[
scale=0.5,
anchor=base west,
text=black,
rotate=0.0
]{\LARGE \textbf{Saturation regime}};
\draw (axis cs:-9.5,27.5) node[
scale=0.5,
anchor=base west,
text=black,
rotate=0.0
]{\LARGE \textbf{Linear regime}};
\end{axis}

\end{tikzpicture}}
	%		{\includegraphics[scale=1]{Fig/Python/SDR_SNR_SNDR.pdf}}
} 
\caption{\gls{snr}, \gls{sdr} and \gls{sndr} of the \gls{mrt} versus \gls{z3ro} precoders as a function of the back-off (fixed $p_{\mathrm{PA}}=p/M$, varying $p_{\mathrm{SAT}}$). The \gls{z3ro} precoder outperforms \gls{mrt} in the saturation regime, even with a perfect \gls{dpd} implemented.}
\label{fig:SDR_SNR_SNDR_soft_lim} 
\vspace{-1em}
\end{figure}

We now study the performance for two more practical PA models, not limited to a third-order model. They are depicted in Fig.~\ref{fig:PA_models}. The Rapp \gls{pa} model~\cite{Rapp1991EffectsOH} is given by
\begin{align*}
	y_m=\frac{x_m}{\left(1+\left|\frac{x_m}{\sqrt{p_{\mathrm{sat}}}}\right|^{2S}\right)^{\frac{1}{2S}}},
\end{align*}
where $S$ is a smoothness parameter and $p_{\mathrm{sat}}$ is the maximal output power of the PA. As $S \rightarrow +\infty$, we obtain the soft limiter model \cite{Tellado2003}
\begin{align*}
	y_m=\begin{cases}
		x_m &\text{if}\ x_m\leq \sqrt{p_{\mathrm{sat}}}\\
		\sqrt{p_{\mathrm{sat}}} &\text{otherwise}
	\end{cases},
\end{align*}
which can be seen as the input of the antenna $m$ provided that perfect per-antenna \gls{dpd} has been applied before the \gls{pa} \cite{cripps2006rf}. Note that this induces a significant complexity and the \gls{dpd} typically only compensates for weakly nonlinear effects while the clipping due to the finite $p_{\mathrm{sat}}$, \textit{i.e.}, the strongly nonlinear effects, is not compensated.

We compare the \gls{snr}, \gls{sdr} and \gls{sndr} of the \gls{mrt} and the \gls{z3ro} precoders. The Bussgang theorem \cite{bussgang1952crosscorrelation,demir2020bussgang} implies that the received signal can be decomposed as $r=G s + d + v$, where $d$ is the nonlinear distortion, which is uncorrelated with the transmit signal $s$ and noise $v$. The linear gain $G$ can be evaluated as $G=\mathbb{E}(rs^*)/p$. The signal variance is given by $|G|^2p$. Using the fact that $s$, $v$ and $d$ are uncorrelated, the distortion variance is $\mathbb{E}(|d|^2)=\mathbb{E}(|r|^2)-|G|^2p-\sigma_v^2$. The \gls{snr}, \gls{sdr} and \gls{sndr} are thus given by
\begin{align*}
	\text{SNR}&=\frac{|G|^2p}{\sigma_v^2},\ \text{SDR}=\frac{|G|^2p}{\mathbb{E}(|d|^2)},\ \text{SNDR}=\frac{|G|^2p}{\mathbb{E}(|d|^2)+\sigma_v^2},
\end{align*}
where the expectations can be evaluated using the statistics of the transmit symbols $s$. 

For a pure LOS channel, Fig.~\ref{fig:SDR_SNR_SNDR_soft_lim} shows the evolution of the \gls{snr}, \gls{sdr} and \gls{sndr} of the Z3RO versus MRT precoder as a function of the back-off at each antenna $p_{\mathrm{PA}}/p_{\mathrm{SAT}}$, where $p_{\mathrm{PA}}=p/M$. The simulation parameters are: $S=2$, $M=64$, $M_s=4$, $\frac{M\beta p}{\sigma_v^2}=26$ dB, the saturation power $p_{\mathrm{SAT}}$ is varied while $p$ remains fixed. The soft limiter model is used to evaluate the performance of a perfect \gls{dpd}. The signal $s$ has a complex Gaussian distribution. For low values of $p_{\mathrm{PA}}/p_{\mathrm{sat}}$, the \gls{pa} is in the linear regime and the \gls{mrt} achieves an optimal performance. The \gls{z3ro} precoder performs not as well given its reduced array gain. As the ratio $p_{\mathrm{PA}}/p_{\mathrm{sat}}$ increases, the \gls{pa} enters the saturation regime and distortion becomes non-negligible. \gls{mrt} is outperformed by the \gls{z3ro} precoder, which is only limited by distortion orders higher than three. The perfect \gls{dpd} implemented with an MRT precoder can only compensate for weakly nonlinear effects and therefore only improves MRT performance when the PA enters saturation. Close to saturation, the strongly nonlinear effects, which are not compensated by the \gls{dpd}, take over and it is also outperformed by the Z3RO precoder. In conclusion, the advantages of the \gls{z3ro} precoder versus \gls{mrt} (with or without DPD) can be seen in two ways, in the saturation regime:

1) For a same \gls{sndr}, the \gls{z3ro} precoder can work at a larger ratio $p_{\mathrm{PA}}/p_{\mathrm{sat}}$, implying an enhanced energy efficiency. As an example, to achieve a \gls{sndr} of 15 dB the \gls{z3ro} precoder can work with a ratio $p_{\mathrm{PA}}/p_{\mathrm{sat}}$, which is about 1.5 dB higher.

2) For a same $p_{\mathrm{PA}}/p_{\mathrm{sat}}$, the \gls{z3ro} precoder achieves a larger \gls{sndr}, implying an enhanced capacity. As an example, for $p_{\mathrm{PA}}/p_{\mathrm{sat}}=-2$ dB, the \gls{sndr} can be boosted by about 2 dB.

\begin{figure}[t!]
	\centering 
	\resizebox{\linewidthmod}{!}{%
		{\footnotesize % This file was created with tikzplotlib v0.10.1.
\begin{tikzpicture}

\definecolor{darkgray176}{RGB}{176,176,176}
\definecolor{lightgray204}{RGB}{204,204,204}

\begin{axis}[
legend cell align={left},
legend style={
  fill opacity=0.8,
  draw opacity=1,
  text opacity=1,
  at={(0.15,0.00)},
  anchor=south,
  draw=lightgray204
},
tick align=outside,
tick pos=left,
title={$M=64$, $h_m\sim$ LOS, $M_s=4$, Rapp PA, $\frac{M\beta p_{\mathrm{SAT}}}{\sigma_v^2}=29$ dB},
x grid style={darkgray176},
xlabel={Back-off \(\displaystyle p_{\mathrm{PA}}/p_{\mathrm{SAT}}\) [dB] (fixed $p_{\mathrm{SAT}}$)},
xmin=-10, xmax=0,
xtick style={color=black},
y grid style={darkgray176},
ylabel={Magnitude [dB]},
ymin=10, ymax=30,
ytick style={color=black}
]
\fill[draw=white,shade, right color=blue!0, left color=blue!25]
(axis cs:-10,10)--(axis cs:-6,10)--
(axis cs:-6,30)--(axis cs:-10,30)--cycle;
\fill[draw=white,shade, right color=red!25, left color=red!0]
(axis cs:-6,10)--(axis cs:2,10)--
(axis cs:2,30)--(axis cs:-6,30)--cycle;
\addplot [thick, red, dotted]
table {%
-10 35.0580924558443
-9.47368421052632 32.9022824405204
-8.94736842105263 31.3751345311827
-8.42105263157895 29.6384637727928
-7.89473684210526 28.3027471610151
-7.36842105263158 26.760600796191
-6.84210526315789 25.5215440283135
-6.31578947368421 23.9432923560712
-5.78947368421053 22.9886484082963
-5.26315789473684 21.6653860705104
-4.73684210526316 20.4600070691119
-4.21052631578947 19.4618391931802
-3.68421052631579 18.2255087049091
-3.15789473684211 17.4340676020254
-2.63157894736842 16.4927010247181
-2.10526315789474 15.7366934931159
-1.57894736842105 14.9783971706785
-1.05263157894737 14.2336845190014
-0.526315789473685 13.5558805936655
0 12.9216085271592
};
\addlegendentry{\tiny SDR - MRT}
\addplot [thick, red, dashed]
table {%
-10 18.9984917884299
-9.47368421052632 19.4868206190477
-8.94736842105263 20.0189646780216
-8.42105263157895 20.4857580855903
-7.89473684210526 20.9710476843078
-7.36842105263158 21.4448402853832
-6.84210526315789 21.8907635022493
-6.31578947368421 22.3390882566955
-5.78947368421053 22.7611899085601
-5.26315789473684 23.1442518156752
-4.73684210526316 23.5301343322072
-4.21052631578947 23.9224787469964
-3.68421052631579 24.2820497800649
-3.15789473684211 24.6382256800521
-2.63157894736842 24.9190443848059
-2.10526315789474 25.2364436602215
-1.57894736842105 25.5047543143415
-1.05263157894737 25.7693099311666
-0.526315789473685 26.0263525617197
0 26.2304749805641
};
\addlegendentry{\tiny SNR - MRT}
\addplot [thick, red]
table {%
-10 18.8922002172545
-9.47368421052632 19.2933876126506
-8.94736842105263 19.7122453765505
-8.42105263157895 19.9875993584973
-7.89473684210526 20.2343974221495
-7.36842105263158 20.3253234310465
-6.84210526315789 20.3269895979309
-6.31578947368421 20.0572372141392
-5.78947368421053 19.8631302507698
-5.26315789473684 19.3318726037209
-4.73684210526316 18.7189454807095
-4.21052631578947 18.1327017198914
-3.68421052631579 17.2635730513533
-3.15789473684211 16.6772707678863
-2.63157894736842 15.909694270789
-2.10526315789474 15.2748321751828
-1.57894736842105 14.6097703801486
-1.05263157894737 13.9389692229089
-0.526315789473685 13.3167014674778
0 12.7234770711163
};
\addlegendentry{\tiny SNDR - MRT}
\addplot [thick, blue, dotted]
table {%
-10 39.0885488504755
-9.47368421052632 38.6846737194123
-8.94736842105263 38.5978593842971
-8.42105263157895 38.060072526381
-7.89473684210526 36.5945972609061
-7.36842105263158 35.1426609473046
-6.84210526315789 33.2440371928503
-6.31578947368421 31.1308846656262
-5.78947368421053 29.150001718327
-5.26315789473684 27.0003617451199
-4.73684210526316 25.6109390541854
-4.21052631578947 24.0170059554529
-3.68421052631579 22.6770375278361
-3.15789473684211 21.3527738169348
-2.63157894736842 20.0074223392396
-2.10526315789474 18.8300690306421
-1.57894736842105 17.6645051841935
-1.05263157894737 16.8457872975869
-0.526315789473685 15.9223185906241
0 15.0614542149922
};
\addlegendentry{\tiny SDR - Z3RO}
\addplot [thick, blue, dashed]
table {%
-10 15.9631840242241
-9.47368421052632 16.4878696602447
-8.94736842105263 17.0344194289804
-8.42105263157895 17.5576579355661
-7.89473684210526 18.0901547204871
-7.36842105263158 18.6120199928098
-6.84210526315789 19.1109742709158
-6.31578947368421 19.6224107589825
-5.78947368421053 20.1162328093878
-5.26315789473684 20.6328564378323
-4.73684210526316 21.0695915421869
-4.21052631578947 21.5050047047557
-3.68421052631579 21.9589216259623
-3.15789473684211 22.3412990500467
-2.63157894736842 22.7579101643965
-2.10526315789474 23.1183712343174
-1.57894736842105 23.4676503062484
-1.05263157894737 23.7955744923246
-0.526315789473685 24.0940772790675
0 24.399950331684
};
\addlegendentry{\tiny SNR - Z3RO}
\addplot [thick, blue]
table {%
-10 15.9420883883837
-9.47368421052632 16.461760206923
-8.94736842105263 17.0042247991637
-8.42105263157895 17.5191441913377
-7.89473684210526 18.0293000700935
-7.36842105263158 18.5165345435001
-6.84210526315789 18.9464515384941
-6.31578947368421 19.3259088568655
-5.78947368421053 19.605026273446
-5.26315789473684 19.7309559374983
-4.73684210526316 19.7615872921872
-4.21052631578947 19.5715613521452
-3.68421052631579 19.2928537344967
-3.15789473684211 18.8086714203298
-2.63157894736842 18.1581674016251
-2.10526315789474 17.4548274280873
-1.57894736842105 16.6510343156493
-1.05263157894737 16.0472971707059
-0.526315789473685 15.3064987589369
0 14.5830500700079
};
\addlegendentry{\tiny SNDR - Z3RO}
\addplot [thick, black, dotted, mark=square, mark size=1, mark options={solid}]
table {%
	-10 65.0810965384584
	-9.47368421052632 50.8294423726802
	-8.94736842105263 47.6081235552877
	-8.42105263157895 42.4269675995375
	-7.89473684210526 40.2907071734457
	-7.36842105263158 36.0756191809174
	-6.84210526315789 31.8018805578445
	-6.31578947368421 28.7572238775961
	-5.78947368421053 26.5702076267226
	-5.26315789473684 24.6114476366814
	-4.73684210526316 22.4788423111317
	-4.21052631578947 20.6695815512163
	-3.68421052631579 19.3081003053142
	-3.15789473684211 17.7597551500815
	-2.63157894736842 16.6536413941279
	-2.10526315789474 15.4750354372173
	-1.57894736842105 14.3906541886973
	-1.05263157894737 13.6608061162686
	-0.526315789473685 12.886493846957
	0 12.074609656002
};
\addlegendentry{\tiny SDR - MRT DPD}
\addplot [thick, black, dashed, mark=square, mark size=1, mark options={solid}]
table {%
	-10 19.1149726249031
	-9.47368421052632 19.6457243105155
	-8.94736842105263 20.1573928837632
	-8.42105263157895 20.6959316715819
	-7.89473684210526 21.2288783382864
	-7.36842105263158 21.7603680296157
	-6.84210526315789 22.2294000362771
	-6.31578947368421 22.7287055331832
	-5.78947368421053 23.2306768691508
	-5.26315789473684 23.7015981817569
	-4.73684210526316 24.1524664397162
	-4.21052631578947 24.5509671482431
	-3.68421052631579 24.9789266361423
	-3.15789473684211 25.3296464968267
	-2.63157894736842 25.6547068036356
	-2.10526315789474 25.9624151950942
	-1.57894736842105 26.221781344592
	-1.05263157894737 26.4915862490072
	-0.526315789473685 26.6996448145911
	0 26.8834871632627
};
\addlegendentry{\tiny SNR - MRT DPD}
\addplot [thick, black, mark=square, mark size=1, mark options={solid}]
table {%
	-10 19.1148626821969
	-9.47368421052632 19.6424187339896
	-8.94736842105263 20.1495888168871
	-8.42105263157895 20.6668762529188
	-7.89473684210526 21.1753085187227
	-7.36842105263158 21.6024829728091
	-6.84210526315789 21.7748213306837
	-6.31578947368421 21.7611879347452
	-5.78947368421053 21.5767585143074
	-5.26315789473684 21.1224396506886
	-4.73684210526316 20.2252286228506
	-4.21052631578947 19.1800706088844
	-3.68421052631579 18.2667555057584
	-3.15789473684211 17.0594169687821
	-2.63157894736842 16.1387910904821
	-2.10526315789474 15.1032236764172
	-1.57894736842105 14.1147222617478
	-1.05263157894737 13.4401941791539
	-0.526315789473685 12.7096471811004
	0 11.9334151786408
};
\addlegendentry{\tiny SNDR - MRT DPD}
\draw (axis cs:-4,27.5) node[
scale=0.5,
anchor=base west,
text=black,
rotate=0.0
]{\LARGE \textbf{Saturation regime}};
\draw (axis cs:-9.5,27.5) node[
scale=0.5,
anchor=base west,
text=black,
rotate=0.0
]{\LARGE \textbf{Linear regime}};
\end{axis}

\end{tikzpicture}}
		%		{\includegraphics[scale=1]{Fig/Python/SDR_SNR_SNDR.pdf}}
	} 
	\caption{\gls{snr}, \gls{sdr} and \gls{sndr} of the \gls{mrt} versus \gls{z3ro} precoders as a function of the back-off (fixed $p_{\mathrm{SAT}}$, varying $p_{\mathrm{PA}}$). The \gls{z3ro} precoder outperforms \gls{mrt} in the saturation regime, even with a perfect \gls{dpd} implemented.}
	\label{fig:SDR_SNR_SNDR_LOS_varying_p} 
%	\vspace{-1em}
\end{figure}
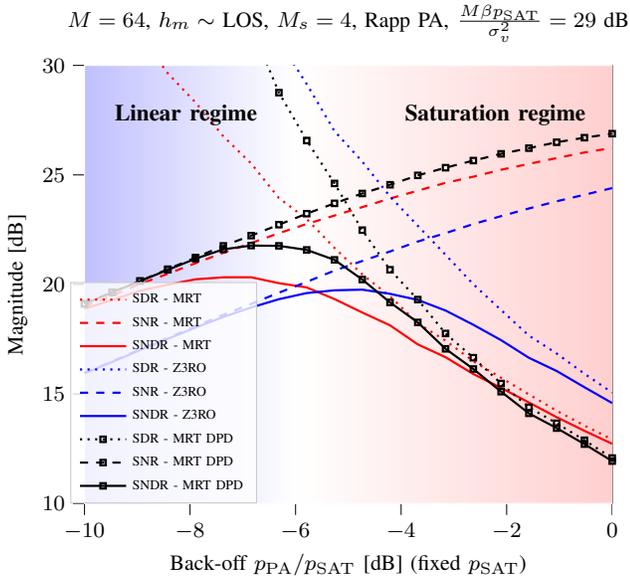

Fig.~\ref{fig:SDR_SNR_SNDR_LOS_varying_p} considers the same simulation parameters as Fig.~\ref{fig:SDR_SNR_SNDR_soft_lim}. The only difference is that $p$ and thus $p_{\mathrm{PA}}=p/M$ is varied, while the saturation power $p_{\mathrm{SAT}}$ remains fixed. As the back-off $p_{\mathrm{PA}}/p_{\mathrm{sat}}$ increases, the SNR of both precoders improves but their SDR degrades. Here again, MRT is only optimal in the linear regime while it is outperformed by the Z3RO precoder in the saturation regime. The DPD improves the SDR performance of MRT but this improvement disappears far in the saturation regime due to clipping.

\subsection{Ergodic Achievable Rate}

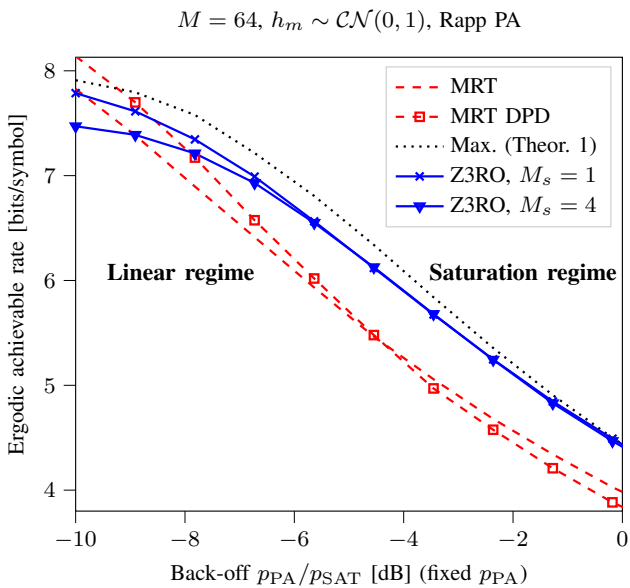
\begin{figure}[t!]
	\centering 
	\resizebox{\linewidthmod}{!}{%
		{\footnotesize % This file was created with tikzplotlib v0.10.1.
\begin{tikzpicture}

\definecolor{darkgray176}{RGB}{176,176,176}
\definecolor{lightgray204}{RGB}{204,204,204}

\begin{axis}[
legend cell align={left},
legend style={fill opacity=0.8, draw opacity=1, text opacity=1, draw=lightgray204},
tick align=outside,
tick pos=left,
title={$M=64$, $h_m \sim \mathcal{CN}(0,1)$, Rapp PA},
x grid style={darkgray176},
xlabel={Back-off \(\displaystyle p_{\mathrm{PA}}/p_{\mathrm{SAT}}\) [dB] (fixed $p_{\mathrm{PA}}$)},
xmin=-10, xmax=0,
xtick style={color=black},
y grid style={darkgray176},
ylabel={Ergodic achievable rate [bits/symbol]},
ymin=3.8, ymax=8.1292645806119,
ytick style={color=black}
]
\addplot [thick, red, dashed]
table {%
-10 7.8268829732654
-8.90909090909091 7.38436312335846
-7.81818181818182 6.89599244490852
-6.72727272727273 6.41340737242834
-5.63636363636364 5.93269141005029
-4.54545454545455 5.46471975535714
-3.45454545454546 5.05790262702464
-2.36363636363636 4.67968456552607
-1.27272727272727 4.34160387033028
-0.181818181818184 4.0281094055476
0.909090909090908 3.75891035172013
2 3.52144684819971
};
\addlegendentry{MRT}
\addplot [thick, red, dashed, mark=square, mark size=1.5, mark options={solid}]
table {%
	-10 8.13892319608196
	-8.90909090909091 7.69905135111783
	-7.81818181818182 7.17225626134094
	-6.72727272727273 6.57461724604085
	-5.63636363636364 6.01887329520747
	-4.54545454545455 5.47912900151165
	-3.45454545454546 4.970185042832
	-2.36363636363636 4.57617523027198
	-1.27272727272727 4.20903075100105
	-0.181818181818184 3.8833807175265
	0.909090909090908 3.62297286351466
	2 3.37229419460372
};
\addlegendentry{MRT DPD}
\addplot [thick, black, dotted]
table {%
-10 7.90984468859227
-8.90909090909091 7.79506388983211
-7.81818181818182 7.57868198281756
-6.72727272727273 7.21592420053173
-5.63636363636364 6.8062700593555
-4.54545454545455 6.33930731798387
-3.45454545454546 5.83987429895912
-2.36363636363636 5.35616552065209
-1.27272727272727 4.90691770256586
-0.181818181818184 4.49932665992552
0.909090909090908 4.140083867052
2 3.83057722938584
};
\addlegendentry{Max. (Theor. 1)}
\addplot [thick, blue, mark=x, mark size=2, mark options={solid}]
table {%
-10 7.78705746399298
-8.90909090909091 7.61351217269261
-7.81818181818182 7.34582306099857
-6.72727272727273 6.99068105980481
-5.63636363636364 6.56412727974298
-4.54545454545455 6.11965762688925
-3.45454545454546 5.67559497454368
-2.36363636363636 5.24693822255869
-1.27272727272727 4.84612107393455
-0.181818181818184 4.49077999565162
0.909090909090908 4.1622058443908
2 3.87940854720927
};
\addlegendentry{Z3RO, $M_s=1$}
\addplot [thick, blue, mark=triangle*, mark size=2, mark options={solid,rotate=180}]
table {%
-10 7.47095501166723
-8.90909090909091 7.38929473760188
-7.81818181818182 7.21197096010677
-6.72727272727273 6.92969343214612
-5.63636363636364 6.5495405623423
-4.54545454545455 6.12661439924412
-3.45454545454546 5.68000258379094
-2.36363636363636 5.24405268263465
-1.27272727272727 4.82886661214384
-0.181818181818184 4.46800110536475
0.909090909090908 4.14159237020303
2 3.85126110309711
};
\addlegendentry{Z3RO, $M_s=4$}
\draw (axis cs:-3.6,6) node[
scale=0.5,
anchor=base west,
text=black,
rotate=0.0
]{\LARGE \textbf{Saturation regime}};
\draw (axis cs:-9.5,6) node[
scale=0.5,
anchor=base west,
text=black,
rotate=0.0
]{\LARGE \textbf{Linear regime}};
\end{axis}

\end{tikzpicture}}
		%		{\includegraphics[scale=1]{Fig/Python/SDR_SNR_SNDR.pdf}}
	} 
	\caption{Ergodic achievable rates of the MRT, maxima of Theorem~\ref{theorem:maxima} and Z3RO precoder as a function of the back-off (fixed $p_{\mathrm{PA}}$, varying $p_{\mathrm{SAT}}$).}
	\label{fig:achievable_rate_NLOS_fixed_p} 
%	\vspace{-1em}
\end{figure}

The SNDR, computed through the Bussgang theorem, can be converted to an achievable rate $R$ (and thus a lower bound on the capacity), expressed in bits per symbol, by considering the worst case of having noise and distortion Gaussian distributed
\begin{align*}
	R = \log_2(1+\text{SNDR}).
\end{align*}
Taking the expectation of $R$ with respect to the channel statistics, an ergodic achievable rate is obtained. It is shown in Fig.~\ref{fig:achievable_rate_NLOS_fixed_p} for a general channel and for the different studied precoders, as a function of the back-off, fixing $p$ and varying $p_{\mathrm{SAT}}$. As the PA enters saturation, MRT is quickly outperformed by the other precoders. In order, the maximum of Theorem~1 outperforms the Z3RO precoders. The $M_s=1$ Z3RO precoder performs relatively better than the $M_s=4$ version due to its higher array gain. Still, further into saturation, except MRT, their performance converges. This can be intuitively explained by the fact that the performance is limited by the SDR and not the SNR, implying that an array gain penalty becomes less detrimental in that regime.

\section{Conclusion}
\label{section:conclusion}

In this work, a novel family of linear precoders, the so-called Z3RO family, are proposed for large array-based transmission to allow operating the PAs in an energy efficient operation point closer to saturation. Their implementation has a similar complexity as MRT. In addition, they cancel the third-order distortion at the user location, without requiring knowledge of the PA model and the channel statistics. Their array gain penalty is shown to become negligible for large antenna arrays and especially in the saturation regime. The optimal precoder has no closed-form solution for a general channel. Therefore, the Z3RO precoder was proposed, having a closed-form solution, which is optimal for a pure LOS channel and achieves a good performance for a general channel. Perspectives include the extension to wideband channels, inter-user interference cancellation and \gls{pa} mismatch. The study of the impact of pilot contamination and channel estimation errors is important in future R\&D.

\section{Appendix}
\label{section:appendix}

%\TODO{reread everything in details and potentially simplify}

\subsection{Reasoning behind Conjecture~\ref{conjecture}}\label{appendix:conjecture}

Looking at (\ref{eq:complex_formulation}), if the zero third-order distortion constraint vanishes, it is clear that it is better to have all $g_m$ real and positive to maximize the array gain, as MRT does. However, the zero third-order constraint implies that at least one antenna gain should have an opposite polarity, which degrades the array gain through destructive interference. Several choices are possible for the gains of the antennas with opposite polarity, including complex gains. However, to minimize the array gain degradation, given the transmit power constraint, it is intuitively better to use one strongly saturated (high absolute gain) with opposite polarity rather than several ones being less saturated. Due to the third order, the antenna with opposite polarity ``might just be slightly more amplified" to compensate for the distortion of all others while consuming a minimal transmit power and degrading only slightly the array gain. If a single antenna is used to compensate for the distortion of all others, its polarity should be opposite and thus negative. The problem then becomes fully real-valued. These considerations were confirmed by extensive simulations based on available solvers, for various channel realizations and using different initialization points given the non-concavity of the objective function.

\subsection{Proof of Theorem~\ref{theorem:maxima}}\label{appendix:proof_theorem}

\subsubsection{Critical Points based on First-Order Conditions}

The Lagrangian formulation of problem (\ref{eq:all_real_3rd_problem}) is
\begin{align*}
	L&=\left(\sum_{m=0}^{M-1} g_m r_m\right)^2 - \lambda \left(\sum_{m=0}^{M-1} g_m^2-1\right)-\mu \sum_{m=0}^{M-1} r_m g_m^3.
\end{align*}
The critical points are obtained by setting the derivatives with respect to $g_0,...,g_{M-1}$ to zero while ensuring that the two constraints are satisfied. The derivative with respect to $g_m$ is given by
\begin{align}
	\frac{\partial L}{\partial g_m}&=0\nonumber\\
	\Leftrightarrow 0&=-3r_m\mu g_{m}^2-2\lambda g_{m}+2r_m\sum_{m'=0}^{M-1}g_{m'}r_{m'},\label{eq:FOC_basic}
\end{align}
whose two possible solutions are given by
\begin{align}
	g_m%&=\frac{2\lambda\pm \sqrt{4\lambda^2+24\mu \sum_{m'=0}^{M-1}g_{m'}}}{-6\mu}\\
%	&=\frac{-\frac{\lambda}{r_m}\pm \sqrt{\frac{\lambda^2}{r_m^2}+6\mu \sum_{m'=0}^{M-1}g_{m'}r_{m'}}}{3\mu}\\
	&=\frac{-{\lambda}\pm \sqrt{{\lambda^2}+6r_m^2\mu \sum_{m'=0}^{M-1}g_{m'}r_{m'}}}{3\mu r_m}. \label{eq:critical_point_no_lambda}%\\
%	&={-\frac{\lambda}{3\mu r_m}\pm \sqrt{\frac{\lambda^2}{9\mu^2 r_m^2}+\frac{2}{3\mu} \sum_{m'=0}^{M-1}g_{m'}r_{m'}}}. %\label{eq:g_m_general}
\end{align}
Now, multiplying (\ref{eq:FOC_basic}) by $g_m$ and summing over $m$ gives
\begin{align}
	0&=-3\mu \sum_{m=0}^{M-1} r_mg_{m}^3-2\lambda \sum_{m=0}^{M-1} g_{m}^2+2\left(\sum_{m'=0}^{M-1}g_{m'}r_{m'}\right)^2\nonumber\\
	\lambda&=\left(\sum_{m'=0}^{M-1}g_{m'}r_{m'}\right)^2, \label{eq:lambda_squared}
\end{align}
where we used the two constraints in (\ref{eq:all_real_3rd_problem}), which should be satisfied for a feasible solution. We thus see that $\lambda$ is given by the squared array gain and is thus positive. As explained before Theorem~\ref{theorem:maxima}, to avoid symmetric equivalent solution, we constrain the solution to lead to a strictly positive array gain. Hence, we can write
\begin{align}
	\sqrt{\lambda}=\sum_{m'=0}^{M-1}g_{m'}r_{m'}>0.
\end{align}
Inserting this result into (\ref{eq:critical_point_no_lambda}) gives
\begin{align}
	g_m	&={-\frac{\lambda}{3\mu r_m}\pm \sqrt{\frac{\lambda^2}{9\mu^2 r_m^2}+\frac{2}{3\mu} \sqrt{\lambda}}}\nonumber\\
%	&={-\frac{\alpha}{3 r_m}\pm \sqrt{\frac{\alpha^2}{9 r_m^2}+\frac{2\alpha}{3\sqrt{\lambda}} }}\\
%	&={-\frac{\alpha}{3 r_m}\pm \sqrt{\frac{\alpha^2}{9 r_m^2}+\frac{\alpha^2}{9} \xi }}\\
	&=\alpha\frac{-1\pm\sqrt{1+r_m^2\xi}}{r_m}, \label{eq:critical_points_simplified}
\end{align}
where $\alpha=\frac{\lambda}{3\mu}$ and $\xi=\frac{6\mu}{\sqrt{\lambda}\lambda}$. They can be related as $\lambda=\frac{4}{(\xi\alpha)^2}$ and $\mu=\frac{\lambda}{3\alpha}=\frac{4}{(\xi\alpha)^2}\frac{1}{3\alpha}$. From (\ref{eq:lambda_squared}), we have $\lambda>0$. For the third-order nonlinear constraint to be valid, $g_m$ should be allowed to have both positive and negative values that can cancel each other. This implies positiveness of $\xi$ and thus $\mu$ and $\alpha$ too. Once the solution $\pm$ for each antenna $m$ is chosen, the values of the constants $\alpha$ and $\xi$ can be found thanks to the two constraints. If the constraints can be satisfied, this results in a given critical point of the problem.

To formalize it, let us define as $\mathcal{M}^{+}$ the set of antenna indices where the solution with sign $+$ is chosen, leading to a positive gain $g_m$. The set of remaining antennas is denoted by $\mathcal{M}^{-}$. These antennas have a negative gain $g_m$. Then, the zero third-order nonlinear constraint implies that
\begin{align*}
	\sum_{m\in \mathcal{M}^+} \frac{(-1+\sqrt{1+r_m^2\xi})^3}{r_m^2}=\sum_{m'\in \mathcal{M}^-} \frac{(1+\sqrt{1+r_{m'}^2\xi})^3}{r_{m'}^2}.
\end{align*}
The value of $\xi$ is the real positive constant that satisfies this equation. It can be found through, \textit{e.g.}, a line search. This equation does not always have a solution, especially if the cardinality of $\mathcal{M}^-$ is too large and/or the channel gains of antennas in $\mathcal{M}^-$ are large. For a given feasible solution $\xi$, the value of $\alpha$ corresponds to the normalization constant, which can be found using (\ref{eq:alpha_def}) as
\begin{align*}
	\alpha=\frac{1}{\sqrt{\sum_{m\in \mathcal{M}^+} \frac{(-1+\sqrt{1+r_m^2\xi})^2}{r_m^2} + \sum_{m'\in \mathcal{M}^-} \frac{(-1-\sqrt{1+r_{m'}^2\xi})^2}{r_{m'}^2}}}.
\end{align*}

\subsubsection{Maxima based on Second-Order Conditions}

The previous section has given the general structure of the critical points of the problem. We now determine which critical points are maxima by analysing the local concavity of the cost function. More specifically, we first show that i) the critical points characterized by a set $\mathcal{M}^{-}$ containing more than one antenna cannot be maxima and ii) a set $\mathcal{M}^{-}$ containing a single element leads to a maximum.

To do this, we study the Hessian of the cost function. We now directly integrate the constraints into the cost function. First of all, the zero third-order nonlinear constraint can be imposed by fixing one of the channel gains. Indeed, without loss of generality, this constraint implies that
\begin{align*}
	r_0g_0^3&=-\sum_{m=1}^{M-1}r_mg_m^3 \Leftrightarrow g_0=-\left(\sum_{m=1}^{M-1}\frac{r_m}{r_0}g_m^3\right)^{1/3}.
\end{align*}
Hence, the optimization can be performed over the remaining variables $g_1,...,g_{M-1}$, while $g_0$ is fixed by the other ones. Similarly, the transmit power constraint can be imposed on the cost function by using the change of variables
\begin{align}
	g_m&=\frac{\tilde{g}_m}{\sqrt{\sum_{m'=0}^{M-1}\tilde{g}_{m'}^2}}, \label{eq:g_m_g_tilde_m}
\end{align}
and performing the optimization with respect to $\tilde{g}_m$. The optimal $g_m$ is then found by normalizing $\tilde{g}_m$ through (\ref{eq:g_m_g_tilde_m}). Hence, problem~(\ref{eq:all_real_3rd_problem}), can be reformulated as
\begin{align}
	\max_{\tilde{g}_1,...,\tilde{g}_{M-1}} f(\tilde{g}_1,...,\tilde{g}_{M-1}) %&\frac{\left(\sum_{m=1}^{M-1}r_m \tilde{g}_m - r_0 \left(\sum_{m'=1}^{M-1}\frac{r_{m'}}{r_0}\tilde{g}_{m'}^3\right)^{1/3}\right)^2}{\sum_{m''=1}\tilde{g}_{m''}^2 + \left(\sum_{m'''=1}^{M-1}\frac{r_{m'''}}{r_0}\tilde{g}_{m'''}^3\right)^{2/3}}\\
	&=\frac{n(\tilde{g}_1,...,\tilde{g}_{M-1})}{d(\tilde{g}_1,...,\tilde{g}_{M-1})}, \label{eq:novel_problem_form}%\\
%	&=\frac{n(\tilde{\vect{g}})}{d(\tilde{\vect{g}})},
\end{align}
where
\begin{align*}
	n(\tilde{g}_1,...,\tilde{g}_{M-1})&=\left(\sum_{m=1}^{M-1}r_m \tilde{g}_m + r_0 \tilde{g}_0 \right)^2\\
	d(\tilde{g}_1,...,\tilde{g}_{M-1})&=\sum_{m''=1}\tilde{g}_{m''}^2 + \tilde{g}_0^2\\
	\tilde{g}_0&=-\left(\sum_{m=1}^{M-1}\frac{r_m}{r_0}\tilde{g}_m^3\right)^{1/3}.
\end{align*}
The critical points of (\ref{eq:all_real_3rd_problem}) are properly normalized so that $g_m=\tilde{g}_m$ and are also critical points of (\ref{eq:novel_problem_form}). The other way around, critical points of (\ref{eq:novel_problem_form}) with a proper normalization (\ref{eq:g_m_g_tilde_m}) are also critical points of (\ref{eq:all_real_3rd_problem}). Hence, we can study the Hessian of $f(\tilde{g}_1,...,\tilde{g}_{M-1})$ at each critical point to determine if the critical points found in (\ref{eq:critical_points_simplified}) are maxima or not.

\paragraph{Hessian Computation} At the critical points, some simplifications occurs. The $(\hat{m},\tilde{m})$ element of the Hessian matrix is given by
\begin{align*}
	\frac{\partial^2 f}{\partial \tilde{g}_{\hat{m}}\partial \tilde{g}_{\tilde{m}}}&=\frac{\frac{\partial}{\partial \tilde{g}_{\hat{m}}} (\frac{\partial n}{\partial \tilde{g}_{\tilde{m}}} d - n \frac{\partial d}{\partial \tilde{g}_{\tilde{m}}})d^2-(\frac{\partial n}{\partial \tilde{g}_{\tilde{m}}} d - n \frac{\partial d}{\partial \tilde{g}_{\tilde{m}}}) \frac{\partial}{\partial \tilde{g}_{\hat{m}}}d^2}{d^4}\\
	&=\frac{\partial^2 n}{\partial \tilde{g}_{\tilde{m}}\partial \tilde{g}_{\hat{m}}} - \lambda \frac{\partial^2 d}{\partial \tilde{g}_{\tilde{m}}\partial \tilde{g}_{\hat{m}}},
\end{align*}
where we use the fact that, at critical points, the first order derivative is null, and thus $\frac{\partial n}{\partial \tilde{g}_{\tilde{m}}} d - n \frac{\partial d}{\partial \tilde{g}_{\tilde{m}}}=0$. Moreover, at the critical points (\ref{eq:critical_points_simplified}), $d=1$ and $n=\lambda$. After several derivations, defining $\mat{I}=\mat{I}_{M-1}$, $\tilde{\vect{g}}=(\tilde{g}_1,...,\tilde{g}_{M-1})^T$ and $\vect{r}=(r_0,...,r_{M-1})^T$, we find
\begin{align*}
	\mat{H}&=\frac{\partial^2 n}{\partial \tilde{\vect{g}}\partial \tilde{\vect{g}}^T} - \lambda \frac{\partial^2 d}{\partial \tilde{\vect{g}}\partial \tilde{\vect{g}}^T}\\
	&=2\left(\vect{r} -\frac{1}{\tilde{g}_0^2} \diag(\vect{r}) \tilde{\vect{g}}^2\right)\left(\vect{r} -\frac{1}{\tilde{g}_0^2} \diag(\vect{r}) \tilde{\vect{g}}^2\right)^T\\
	&-2\lambda \left(\mat{I}-\frac{2\diag(\vect{r})\diag(\tilde{\vect{g}})}{r_0\tilde{g}_0} -\frac{\diag(\vect{r})\tilde{\vect{g}}^2(\tilde{\vect{g}}^2)^T\diag(\vect{r})}{r_0^2\tilde{g}_0^4} \right)\\
	&-4  \sqrt{\lambda}   \left(\frac{1}{\tilde{g}_0^3} \diag(\vect{r})\diag(\tilde{\vect{g}})+ \frac{\diag(\vect{r}) \tilde{\vect{g}}^2 (\tilde{\vect{g}}^2)^T\diag(\vect{r}) }{r_0\tilde{g}_0^5}\right).
\end{align*}
At a critical point, we have $\frac{\partial n}{\partial \tilde{\vect{g}}}d=n\frac{\partial d}{\partial \tilde{\vect{g}}}$. Computing these derivatives, using $d=1$ and $n=\lambda$, results in
\begin{align*}
%	2\sqrt{\lambda}\left(\vect{r} -\frac{1}{\tilde{g}_0^2} \diag(\vect{r}) \tilde{\vect{g}}^2 \right)&= 2\lambda \tilde{\vect{g}} -2 \lambda\frac{1}{r_0}\frac{1}{\tilde{g}_0}\diag(\vect{r})\tilde{\vect{g}}^2\\
	\left(\vect{r} -\frac{1}{\tilde{g}_0^2} \diag(\vect{r}) \tilde{\vect{g}}^2 \right)&= \sqrt{\lambda}  \left(\tilde{\vect{g}} - \frac{1}{r_0\tilde{g}_0}\diag(\vect{r})\tilde{\vect{g}}^2\right).
\end{align*}
Using this, the Hessian becomes
\begin{align*}
	&\mat{H}=2\lambda\left(\tilde{\vect{g}} - \frac{1}{r_0\tilde{g}_0}\diag(\vect{r})\tilde{\vect{g}}^2\right)\left(\tilde{\vect{g}} - \frac{1}{r_0\tilde{g}_0}\diag(\vect{r})\tilde{\vect{g}}^2\right)^T \\
	&-2\lambda \left(\mat{I}-\frac{2\diag(\vect{r})\diag(\tilde{\vect{g}})}{r_0\tilde{g}_0} -\frac{\diag(\vect{r})\tilde{\vect{g}}^2(\tilde{\vect{g}}^2)^T\diag(\vect{r})}{r_0^2\tilde{g}_0^4} \right)\\
	&-4  \sqrt{\lambda}   \left(\frac{1}{\tilde{g}_0^2} \diag(\vect{r})\diag(\tilde{\vect{g}})+ \frac{\diag(\vect{r}) \tilde{\vect{g}}^2 (\tilde{\vect{g}}^2)^T\diag(\vect{r}) }{r_0\tilde{g}_0^5}\right).
\end{align*}

\paragraph{Non-Maxima among Critical Points}

We now show that the critical points characterized by a set $\mathcal{M}^{-}$ containing more than one antenna cannot be maxima. %We consider that the array gain and thus $\lambda$ is strictly positive. 
To demonstrate this, we show that the Hessian $\mat{H}$ evaluated at these critical points is not negative semi-definite, which implies that the cost function is not locally concave. Thus, no maximum can be found. Let us consider one such critical point with at least two antenna indices belonging to $\mathcal{M}^{-}$. With a potential reindexing, we can define two of these indices in $\mathcal{M}^{-}$ as antennas $m=0$ and $m=1$ and we further index them so that $r_1 \geq r_0$. From (\ref{eq:critical_points_simplified}), we can find that $g_0<0$ and $g_1<0$ and thus, similarly $\tilde{g}_0=g_0<0$ and $\tilde{g}_1=g_1<0$.

A condition for $\mat{H}$ to be negative semi-definite is that all of its diagonal elements are negative. Let us focus on its first diagonal element
\begin{align*}
	[\mat{H}]_{0,0}&=2\lambda\left(\tilde{g}_1 - \frac{r_1\tilde{g}_1^2}{r_0\tilde{g}_0}\right)^2 -2\lambda \left(1-\frac{2r_1\tilde{g}_1}{r_0\tilde{g}_0} -\frac{r_1^2\tilde{g}_1^4}{r_0^2\tilde{g}_0^4}\right)\\
	&-4  \sqrt{\lambda}   \frac{1}{\tilde{g}_0^5}({\tilde{g}_0^3} r_1\tilde{g}_1+ \frac{r_1^2\tilde{g}_1^4}{r_0}),
\end{align*}
which is the sum of three terms. The first one is clearly positive due to the square and since $\lambda>0$. The third also, given that $\tilde{g}_0<0$, $\tilde{g}_1<0$, $r_0>0$ and $r_1>0$. The second term requires more development. It can be rewritten as
\begin{align*}
	2\lambda \left(-1+\frac{r_1\tilde{g}_1}{r_0\tilde{g}_0} (2+\frac{r_1\tilde{g}_1^3}{r_0\tilde{g}_0^3})\right) \geq 2\lambda \left(-1+2\frac{r_1\tilde{g}_1}{r_0\tilde{g}_0} \right).
\end{align*}
Using (\ref{eq:critical_points_simplified}) and the fact that $r_1 \geq r_0$, we find
\begin{align*}
	\frac{r_1\tilde{g}_1}{r_0\tilde{g}_0}&=\frac{1+\sqrt{1+r_1^2\xi}}{1+\sqrt{1+r_0^2\xi}} \geq 1,
\end{align*}
and thus
\begin{align*}
	2\lambda \left(-1+\frac{r_1\tilde{g}_1}{r_0\tilde{g}_0} (2+\frac{r_1\tilde{g}_1^3}{r_0\tilde{g}_0^3})\right) \geq 2\lambda > 0.
\end{align*}
This implies that $[\mat{H}]_{0,0}>0$ and thus, no critical points with negative gains at more than one antenna can lead to a maximum.

\paragraph{Maxima among Critical Points}\label{appendix:paragraph_maxima}

\begin{figure*}[t!]
	{\footnotesize \begin{align}
			&\left(\mat{I} - \frac{\diag(\vect{r})\diag(\tilde{\vect{g}})}{r_0\tilde{g}_0}\right)^{1/2}\left[-\mat{I}+\left(\mat{I} - \frac{\diag(\vect{r})\diag(\tilde{\vect{g}})}{r_0\tilde{g}_0}\right)^{1/2} \tilde{\vect{g}} \tilde{\vect{g}}^T \left(\mat{I} - \frac{\diag(\vect{r})\diag(\tilde{\vect{g}})}{r_0\tilde{g}_0}\right)^{1/2}\right]\left(\mat{I} - \frac{\diag(\vect{r})\diag(\tilde{\vect{g}})}{r_0\tilde{g}_0}\right)^{1/2}\nonumber\\
			&+\frac{1}{r_0\tilde{g}_0}\diag(\vect{r})\diag(\tilde{\vect{g}})+\frac{1}{r_0^2\tilde{g}_0^4} \diag(\vect{r})\tilde{\vect{g}}^{2}(\tilde{\vect{g}}^{2})^T\diag(\vect{r}). \label{eq:rearrangement}
	\end{align}}
\end{figure*}

We now show that a set $\mathcal{M}^{-}$ containing a single element leads to a maximum. Let us rearrange antenna indices such that the element in $\mathcal{M}^{-}$ corresponds to index $m=0$. From (\ref{eq:critical_points_simplified}), we can find that $g_0=\tilde{g}_0<0$ and $g_m=\tilde{g}_m>0,\ \forall m\neq 0$. We now need to show that the Hessian $\mat{H}$ is semi-definite negative at this critical points. To do this, we use the fact that the sum of negative semi-definite matrices is negative semi-definite. Matrix $\mat{H}$ is the sum of matrices multiplied by either $\lambda$ or $\sqrt{\lambda}$. The group proportional to $4\sqrt{\lambda}$ is
\begin{align*}
-\frac{1}{\tilde{g}_0^2} \diag(\vect{r})\diag(\tilde{\vect{g}})- \frac{1}{r_0\tilde{g}_0^5} \diag(\vect{r}) \tilde{\vect{g}}^2 (\tilde{\vect{g}}^2)^T\diag(\vect{r}) ).
\end{align*}
Multiplying this matrix by $\diag(\vect{r}^{-1/2})\diag(\tilde{\vect{g}}^{-1/2})$ on the left, on the right and by the scalar ${\tilde{g}_0^2}$ does not change its definiteness given that all quantities are positive definite.
\begin{align*}
	- \mat{I}+ \frac{1}{r_0\tilde{g}_0^3} \diag(\vect{r})^{1/2} \tilde{\vect{g}}^{3/2} (\tilde{\vect{g}}^{3/2})^T\diag(\vect{r})^{1/2}.
\end{align*}
This matrix has all eigenvalues equal to $-1$ except a single one equal to $-1-\frac{\vect{r}^T\tilde{\vect{g}}^3}{r_0\tilde{g}_0^3}=-1-\frac{-r_0\tilde{g}_0^3}{r_0\tilde{g}_0^3}=0$. Hence, it is well negative semi-definite. The group proportional to $2\lambda$ is
\begin{align*}
	&\left(\tilde{\vect{g}} - \frac{1}{r_0\tilde{g}_0}\diag(\vect{r})\tilde{\vect{g}}^2\right)\left(\tilde{\vect{g}} - \frac{1}{r_0\tilde{g}_0}\diag(\vect{r})\tilde{\vect{g}}^2\right)^T\\
	&-\left(\mat{I}-\frac{1}{r_0\tilde{g}_0} \diag(\vect{r})\diag(\tilde{\vect{g}})\right)\\
	&+\frac{1}{r_0\tilde{g}_0} \diag(\vect{r})\diag(\tilde{\vect{g}})+\frac{1}{r_0^2\tilde{g}_0^4} \diag(\vect{r})\tilde{\vect{g}}^2(\tilde{\vect{g}}^2)^T\diag(\vect{r}).
\end{align*}
This can can be rewritten as (\ref{eq:rearrangement}). Multiplying (\ref{eq:rearrangement}) by $\diag(\vect{r}^{-1/2})\diag(\tilde{\vect{g}}^{-1/2})$ on the left and on the right, we get
\begin{align*}
	&\left(\mat{I} - \frac{\diag(\vect{r})\diag(\tilde{\vect{g}})}{r_0\tilde{g}_0}\right)^{1/2}\tilde{\vect{g}}\tilde{\vect{g}}^T\left(\mat{I} - \frac{\diag(\vect{r})\diag(\tilde{\vect{g}})}{r_0\tilde{g}_0}\right)^{1/2}\\
	&-\mat{I}+\frac{1}{r_0\tilde{g}_0}(\mat{I}+\frac{1}{r_0\tilde{g}_0^3} \diag(\vect{r}^{1/2})\tilde{\vect{g}}^{3/2}(\tilde{\vect{g}}^{3/2})^T\diag(\vect{r}^{1/2})).
\end{align*}
We can subdivide this matrix in two groups. The first two additive terms give a matrix with all eigenvalues equal to $-1$ except a single one equal to
\begin{align*}
	-1+\|\tilde{\vect{g}}\|^2-\frac{\vect{r}^T\tilde{\vect{g}}^3}{r_0\tilde{g}_0}&=-1+1-\tilde{g}_0^2-\frac{-r_0\tilde{g}_0^3}{r_0\tilde{g}_0}=0.
\end{align*}
The second group can be multiplied by the positive quantity $-r_0\tilde{g}_0$ without changing the definiteness. The remaining matrix has all eigenvalues equal to $-1$ except one equal to $-1-\frac{\vect{r}^T\tilde{\vect{g}}^3}{r_0\tilde{g}_0^3}=0$. As a result the group proportional to $2\lambda$ is well negative semi-definite and so is the Hessian $\mat{H}$ for critical points characterized by a single antenna with negative gain. This concludes the proof of Theorem~\ref{theorem:maxima}.

\subsection{Proof of Corollary~\ref{corollary_LOS}}
\label{appendix:proof_corollary}

Particularizing to the LOS case $r_m=\sqrt{\beta}$, equation (\ref{eq:critical_points_simplified}) implies that the two potential values of $\tilde{g}_m$ are the same for all $m$. We denote these possible values by $\alpha$ and $\delta$. Consider that $M_s$ is the number of coefficients $\tilde{g}_m$ set to $\delta$ and $M-M_s$ are set to $\alpha$, with $M_s>0$ (otherwise the zero distortion constraint cannot be satisfied). For a fixed value of $M_s$, applying the zero distortion constraint gives
\begin{align*}
	\delta&=-\alpha\left(\frac{M-M_s}{M_s}\right)^{1/3}.
\end{align*}
Hence, setting $M_s$ values of $\tilde{g}_m$ to $\delta$ and $M-M_s$ to $\alpha$ give critical points of the Lagrangian. We fix $M/2>M_s$ to avoid symmetrical/equivalent solutions. Among these critical points, we know from Appendix~\ref{appendix:paragraph_maxima} that only the critical points with $M_s=1$ give maxima of the problem. In the LOS case, all of them are globally optimum as they achieve the same array gain. The SNR of the precoder $\text{SNR}_{\mathrm{LOS}}^{\mathrm{Z3RO}}$ in (\ref{eq:SNR_Z3RO}) is found by evaluating the array gain for the above values of the precoder at critical points. The SNR of the MRT can also be evaluated for the LOS channel giving
\begin{align*}
	\text{SNR}^{\mathrm{MRT}}&=\frac{\beta p}{\sigma_v^2} M,\ \text{SNR}_{\mathrm{LOS}}^{\mathrm{Z3RO}}=\frac{\beta p}{\sigma_v^2} M \frac{\left(\zeta^{2/3}-(1-\zeta)^{2/3}\right)^{2}}{\zeta^{1/3}+(1-\zeta)^{1/3}},
\end{align*}
where $\zeta=M_s/M$. For a fixed value of $M_s$, as $M\rightarrow \infty$, $\zeta$ goes to zero and the ratio of SNR goes to one. This concludes the proof of Corollary~\ref{corollary_LOS}.

\subsection{Proof of Proposition~\ref{proposition:Z3RO_large_antenna}}\label{appendix:proof_proposition}

We can define the gain at saturated antennas as
\begin{align*}
		\gamma&=\left(\frac{\sum_{m'=M_s}^{M-1} r_{m'}^4}{\sum_{m''=0}^{M_s-1} r_{m''}^4}\right)^{1/3}\\
		&= \left(\frac{M-M_s}{M_s}\right)^{1/3}\left(\frac{\frac{1}{M-M_s}\sum_{m'=M_s}^{M-1} r_{m'}^4}{\frac{1}{M_s}\sum_{m''=0}^{M_s-1} r_{m''}^4}\right)^{1/3}.
\end{align*}
The SNR of the heuristic precoder is given by
\begin{align*}
	&\text{SNR}^{\mathrm{Z3RO}}=\frac{p}{\sigma_v^2}\frac{\left(-\gamma \sum_{m\in \mathcal{M}} r_m^2  + \sum_{m' \notin \mathcal{M}} r_{m'}^2\right)^2}{\gamma^2 \sum_{m\in \mathcal{M}} r_m^2  + \sum_{m' \notin \mathcal{M}} r_{m'}^2}\\
	&=\frac{p}{\sigma_v^2}M \frac{\left(-\gamma \frac{1}{M}\sum_{m\in \mathcal{M}} r_m  + \frac{1}{M}\sum_{m'' \notin \mathcal{M}} r_{m''}\right)^2}{\gamma^2 \frac{1}{M}\sum_{m\in \mathcal{M}} r_m^2  + \frac{1}{M}\sum_{m''' \notin \mathcal{M}} r_{m'''}^2}.
\end{align*}
We can thus write the following ratio
\begin{align*}
	\frac{\text{SNR}^{\mathrm{Z3RO}}}{\text{SNR}_{\mathrm{LOS}}^{\mathrm{Z3RO}}}=&\frac{1}{\beta}\frac{\left(-\gamma \frac{1}{M}\sum_{m\in \mathcal{M}} r_m^2  + \frac{1}{M}\sum_{m'' \notin \mathcal{M}} r_{m''}^2 \right)^2}{\gamma^2 \frac{1}{M}\sum_{m\in \mathcal{M}} r_m^2  + \frac{1}{M}\sum_{m''' \notin \mathcal{M}} r_{m'''}^2}\\
	&\frac{\zeta^{1/3}+(1-\zeta)^{1/3}}{\left(\zeta^{2/3}-(1-\zeta)^{2/3}\right)^{2}}.
\end{align*}
Given the i.i.d. property of the channel gain, as $M$ and $M_s$ grow large for a fixed value of the ratio $M_s/M$, $\gamma$ converges to $\left(\frac{M-M_s}{M_s}\right)^{1/3}$. We can apply the same results to each average of i.i.d. terms appearing in the above expression. This gives
\begin{align*}
	\frac{\text{SNR}^{\mathrm{Z3RO}}}{\text{SNR}_{\mathrm{LOS}}^{\mathrm{Z3RO}}}\rightarrow&\frac{\left(-\left(\frac{M-M_s}{M_s}\right)^{1/3} \frac{M_s}{M}  + \frac{M-M_s}{M}\right)^2}{\left(\frac{M-M_s}{M_s}\right)^{2/3} \frac{M_s}{M}  + \frac{M-M_s}{M}}\\
	&\frac{\zeta^{1/3}+(1-\zeta)^{1/3}}{\left(\zeta^{2/3}-(1-\zeta)^{2/3}\right)^{2}}.
\end{align*}
Using the change of variable $\zeta=M_s/M$ and rearranging terms, we find that the ratio goes to one. This concludes the proof.

% use section* for acknowledgment
%\section*{Acknowledgment}
%
%

% Can use something like this to put references on a page
% by themselves when using endfloat and the captionsoff option.
\ifCLASSOPTIONcaptionsoff
  \newpage
\fi

% trigger a \newpage just before the given reference
% number - used to balance the columns on the last page
% adjust value as needed - may need to be readjusted if
% the document is modified later
%\IEEEtriggeratref{8}
% The "triggered" command can be changed if desired:
%\IEEEtriggercmd{\enlargethispage{-5in}}

% references section

% can use a bibliography generated by BibTeX as a .bbl file
% BibTeX documentation can be easily obtained at:
% http://www.ctan.org/tex-archive/biblio/bibtex/contrib/doc/
% The IEEEtran BibTeX style support page is at:
% http://www.michaelshell.org/tex/ieeetran/bibtex/
\bibliographystyle{IEEEtran}

\bibliography{IEEEabrv,refs}
\end{document}